\useunder{\uline}{\ul}{}
\newtheorem{problem}{Problem}
\newtheorem{assumption}{Assumption}
\newtheorem{lamma}{\textbf{Lamma}}
\newtheorem{proposition}{\textbf{Proposition}}
\newcommand{\dataset}[1]{\(\mathtt{#1}\)}
\begin{document}

%%
%% The "title" command has an optional parameter,
%% allowing the author to define a "short title" to be used in page headers.
\title{Causal Structure Representation Learning of Unobserved Confounders in Latent Space for Recommendation}

%%
%% The "author" command and its associated commands are used to define
%% the authors and their affiliations.
%% Of note is the shared affiliation of the first two authors, and the
%% "authornote" and "authornotemark" commands
%% used to denote shared contribution to the research.
\author{Hangtong Xu}
\affiliation{%
  \institution{MIC Lab, College of Computer Science and Technology, Jilin University}
  \city{Changchun}
  \country{China}
  }
\email{xuht21@mails.jlu.edu.cn}
\author{Yuanbo Xu}
\affiliation{%
  \institution{MIC Lab, College of Computer Science and Technology, Jilin University}
  \city{Changchun}
  \country{China}
  }
\email{yuanbox@jlu.edu.cn}
\authornote{Corresponding Author: Yuanbo Xu}
\author{Chaozhuo Li}
\affiliation{%
  \institution{ Key Laboratory of Trustworthy Distributed Computing and Service (MoE), Beijing University of Posts and Telecommunications}
  \country{China}
  }
\email{lichaozhuo@bupt.edu.cn}

\author{Fuzhen Zhuang}
\affiliation{%
  \institution{Institute of Artificial Intelligence, Beihang University and Zhongguancun Laboratory}
  % \institution{}
  \city{Beijing}
  \country{China}
  }
\email{zhuangfuzhen@buaa.edu.cn}

%%
%% By default, the full list of authors will be used in the page
%% headers. Often, this list is too long, and will overlap
%% other information printed in the page headers. This command allows
%% the author to define a more concise list
%% of authors' names for this purpose.
\renewcommand{\shortauthors}{Hangtong et al.}

%%
%% The abstract is a short summary of the work to be presented in the
%% article.
\begin{abstract}
 Inferring user preferences from users' historical feedback is a valuable problem in recommender systems. Conventional approaches often rely on the assumption that user preferences in the feedback data are equivalent to the real user preferences without additional noise, which simplifies the problem modeling. However, there are various confounders during user-item interactions, such as weather and even the recommendation system itself. Therefore, neglecting the influence of confounders will result in inaccurate user preferences and suboptimal performance of the model. Furthermore, the unobservability of confounders poses a challenge in further addressing the problem. Along these lines, we refine the problem and propose a more rational solution to mitigate the influence of unobserved confounders. Specifically, we consider the influence of unobserved confounders, disentangle them from user preferences in the latent space, and employ causal graphs to model their interdependencies without specific labels. By ingeniously combining local and global causal graphs, we capture the user-specific effects of confounders on user preferences. Finally, we propose our model based on Variational Autoencoders, named \textbf{C}ausal \textbf{S}tructure \textbf{A}ware \textbf{V}ariational \textbf{A}uto\textbf{e}ncoders (CSA-VAE) and theoretically demonstrate the identifiability of the obtained causal graph. We conducted extensive experiments on one synthetic dataset and nine real-world datasets with different scales, including three unbiased datasets and six normal datasets, where the average performance boost against several state-of-the-art baselines achieves up to 9.55\%, demonstrating the superiority of our model. Furthermore, users can control their recommendation list by manipulating the learned causal representations of confounders, generating potentially more diverse recommendation results. Our code is available at \href{https://github.com/MICLab-Rec/CSA}{Code-link}\footnote{https://github.com/MICLab-Rec/CSA}.
\end{abstract}

%%
%% The code below is generated by the tool at http://dl.acm.org/ccs.cfm.
%% Please copy and paste the code instead of the example below.
%%
\begin{CCSXML}
<ccs2012>
   <concept>
       <concept_id>10002951.10003317.10003347.10003350</concept_id>
       <concept_desc>Information systems~Recommender systems</concept_desc>
       <concept_significance>500</concept_significance>
       </concept>
 </ccs2012>
\end{CCSXML}

\ccsdesc[500]{Information systems~Recommender systems}

%%
%% Keywords. The author(s) should pick words that accurately describe
%% the work being presented. Separate the keywords with commas.
\keywords{Causal structure, Preference modeling, Confounders, Variational inference}

\received{20 February 2007}
\received[revised]{12 March 2009}
\received[accepted]{5 June 2009}

%%
%% This command processes the author and affiliation and title
%% information and builds the first part of the formatted document.
\maketitle

\section{Introduction}
\begin{figure}[!tb]
    \centering
    \includegraphics[width = \linewidth]{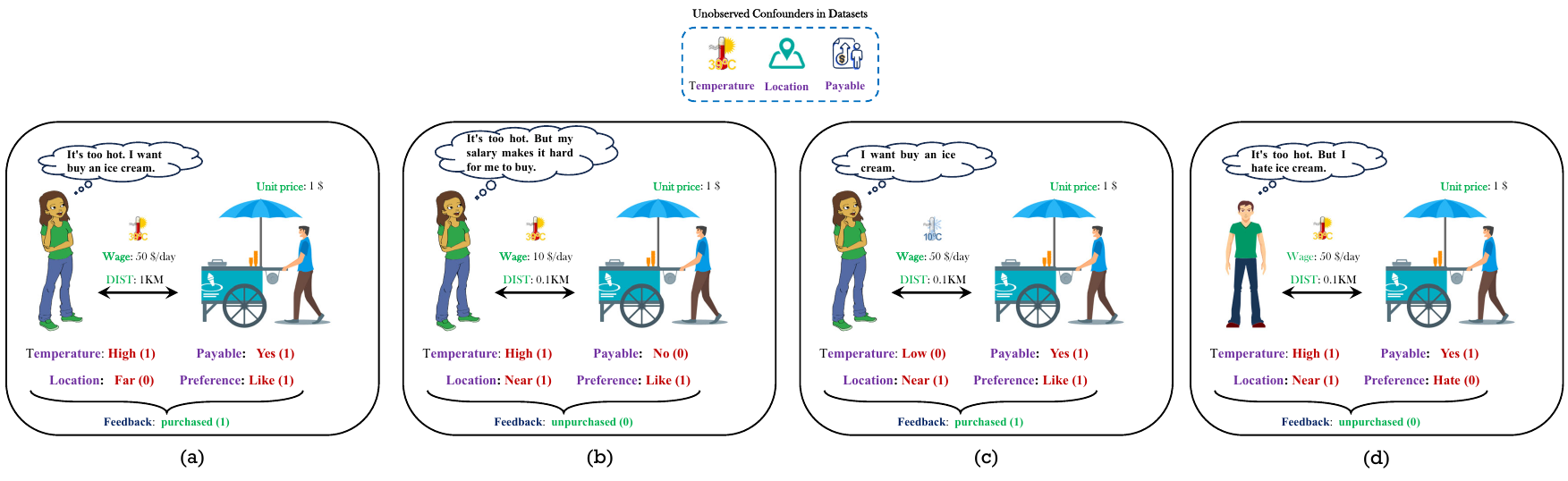}
    \caption{An example illustrating that user preferences in the feedback data are influenced by both the users themselves and external confounders(e.g., temperature, location).}
    \label{fig:1}
\end{figure}
Recommender systems play a vital role in information technology, aiming to assist users in discovering content or products that might align with their interests \cite{xv3,xv2}. User historical feedback data is a crucial basis for model prediction of user preferences. Based on this, many highly effective methods have been proposed, such as MultiVAE \cite{mutivae}, MacridVAE \cite{mrcrid}, etc. Existing work often assumes that user preferences in historical feedback data are noise-free, as shown in Figure \ref{fig:1} (a) and (d), the alignment of user preferences (like or hate) with the collected feedback data (purchased or unpurchased) is crucial, thus focusing on fitting the feedback data can achieve acceptable model performance. \par

However, various confounders inevitably influence users during the interaction process, affecting their final decisions, and the extent of the impact differs among users \cite{survey1,survey2}. As shown in Figure \ref{fig:1}, user preferences are influenced by both the intrinsic characteristics of users (\textsf{user-specific preference}) and external confounders in the interaction environment (e.g., temperature, location). For example, when analyzing user feedback for a specific product like ice cream on an online retail platform, we observe varying user preferences due to confounders. Some users may exhibit opposite feedback regarding their preferences for ice cream. For instance, the wage at the time of purchase is a confounder. As shown in Figure \ref{fig:1} (b), it's a tough decision for users to spend a tenth of their daily salary on ice cream. On the other hand, when user preference is strong enough, confounders have less impact on the final decision. As Figure \ref{fig:1} (c) shows, the user gives positive feedback for ice cream due to their favorite flavor on cold days. \par
Nevertheless, relying solely on confounders to determine user preferences is also unreasonable \cite{xu1}. Figures \ref{fig:1} (a) and (d) show that two users with the same confounder conditions give opposite feedback. Thus, it is evident that user preferences, as reflected in the feedback data, are a combination of the intrinsic characteristics of users and external confounders like wage, representing the interplay of intrinsic and extrinsic influences. This also explains why models improve performance when additional information, such as time and location, is incorporated. Unfortunately, most confounders are unobservable, and we cannot obtain corresponding labels from users as additional information. Hence, we cannot explicitly model confounders to separate them from user preferences, and addressing the dynamic impact of confounders on users remains a challenge.\par

To address these challenges, we first reformulate the user preference prediction problem by introducing the influence of confounders. In this way, user preferences in the feedback data stem from the combined impact of the inherent preferences of users and external confounders. We proposed a mild assumption of confounder independence to disentangle confounders from user preferences in the latent space. Specifically, we assumed that the influence of all confounders on each user originates from the same set of confounders, which ensures user independence of confounders. Furthermore, we found that confounders are not independent of each other. For example, the weather depends on users' location. We utilized a causal structural model (SEM) to represent the generation process of confounders. Specifically, we employed a binary matrix to denote the global causal graph, where directed edges signify the dependency relationships between confounders. However, for different users, the relationships between confounders also vary, sometimes opposite. Thus, we used an additional local causal graph to capture the user-specificity of confounders.

Furthermore, we demonstrate that the learned causal representations of confounders are controllable, potentially offering users fine-grained control over the objectives of their recommendation lists with the learned causal graphs. Finally, we combine the obtained causal representations of confounders with the inherent preferences of users to fit user preferences in historical feedback data, proposed a model based on Variational Autoencoder (VAE) named \textbf{C}ausal \textbf{S}tructure \textbf{A}ware \textbf{V}ariational \textbf{A}uto\textbf{e}ncoders (CSA-VAE) to learn causal representations of user preferences and confounders simultaneously. In addition, we theoretically proved the identifiability of the model.\par

The contributions of our work can be summarized as follows: 
\begin{itemize}
    \item We have re-formalized the problem of user preference prediction, providing a more reasonable modeling approach for user preferences.
    \item We introduced a mild assumption that allows for the independent representation of user preferences and the influence of confounders and utilized causal graphs to capture the dependencies among confounders. Furthermore, we provide proof of the identifiability of the causal graph and the visualization of learned confounder representations.
    \item We employed global and local causal graphs to capture the invariance and specificity between users and confounders. We proposed a model based on the Variational Autoencoder (VAE) to simultaneously learn causal representations of user preferences and confounders.
    \item  We conducted extensive experiments on a synthetic dataset and nine real-world datasets with different scales, including three unbiased datasets and six normal datasets, where the average performance boost against several state-of-the-art baselines achieves up to 9.55\%, demonstrating our model's effectiveness.
    % \item We demonstrate that the learned causal representations of confounders are controllable, potentially offering users fine-grained control over the objectives of their recommendation lists with the learned causal graphs. 
    \item \textcolor{black}{We formalize the user-controlled recommendation task by integrating both latent preferences and confounders while incorporating causal interventions to give the user control over their preferences within the recommendation system.}
\end{itemize}
Our code is publicly available at: \href{https://github.com/MICLab-Rec/CSA}{https://github.com/MICLab-Rec/CSA}.

\section{Related Work}

\subsection{Deconfound in Recommendation} 
With the increasing popularity of causal inference as a method to mitigate bias in recommender systems \cite{xv1}, researchers are paying more attention to the challenges posed by confounding biases. Confounding bias is prevalent in recommender systems due to various confounders. While some studies have addressed specific confounding biases, such as item popularity \cite{ref09,ref10,xu2}, many unobservable confounders may also exist. The mainstream approaches can be broadly categorized into two types: (1) \cite{ref05,ref06} utilize additional signals as instrumental or proxy variables to mitigate confounding bias. (2) \cite{ref07,ref08} consider a multiple-treatment setting and infer surrogate confounders from user exposure, incorporating them into the preference prediction model. (3)  SEM-MacridVAE \cite{sem1}, CaD-VAE \cite{sem2} and PlanRec \cite{sem3} utilize additional signals to learn the latent causal structure of confounders and make recommendation.\par
However, they did not address the challenge of confounders in the absence of corresponding labels. CSA-VAE simultaneously learned user causal preferences and graphs without corresponding labels and employed global and local causal graphs to capture the invariance and specificity between users and confounders.

\subsection{Causal Structure Learning}  
We refer to causal representations constructed by causal graphs as causal representations. Over the past few decades, discovering causal graphs from purely observational data has garnered significant attention. \cite{ref11} proposed NOTEARs with a fully differentiable DAG constraint for causal structure learning, \cite{ref12}show the identifiability of learned causal structure from interventional data. The community has raised interest in combining causality and disentangled representation, and \cite{ref13} proposed a method called CausalGAN, which supports ”do-operation” on images, but it requires the causal graph given as a prior.\par
We draw on key ideas from causal structure learning to enhance the application of latent structure learning in recommendations. Additionally, we identify a key challenge in applying shared latent structure to recommendations: confounding factors affect users differently. To address this, we design a personalized structure learning framework and personalized recommendations.

\section{Methodology}
This section thoroughly introduces our proposed model with the necessary theoretical proofs.\par

\subsection{A More Rational Architecture}
Predicting user preferences from their historical interaction data is a common training paradigm in the domain of recommender systems and is generally based on the fundamental assumption that the user preferences contained in the feedback data reflect the true preferences of the user. Based on such assumptions, the user preference prediction problem can be formulated as follows: \par
\begin{figure}[!tb]
    \centering
    \includegraphics[width = 0.6\linewidth]{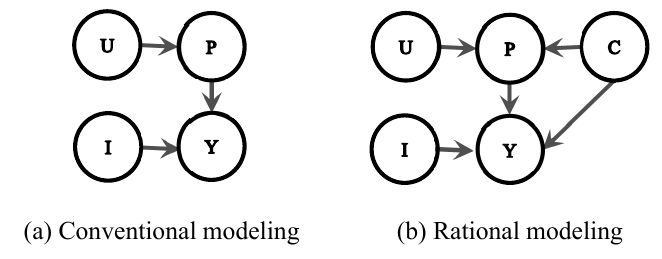}
    \caption{Conventional modeling versus a more rational modeling approach. Left is the conventional modeling, and right is the rational modeling proposed in this work. \textbf{U}$\rightarrow$User; \textbf{I}$\rightarrow$Item; \textbf{C}$\rightarrow$Confounders; \textbf{P}$\rightarrow$Preference in feedback data; \textbf{Y}$\rightarrow$User feedback. The exogenous variables of nodes (e.g., \textbf{C}) are not displayed in the graph.}
    \label{fig:2}
\end{figure}
\begin{problem}
  Given the historical interaction data\footnote{$x_{u, i} \in \{0, 1\}, x_{u, i} = 1$ representing an interaction between user $u$ and item $i$, n is the item num.} $x_{u} = \{x_{u, 1}, ..., x_{u, n}\}$, we can predict the user preference $z_{u}$ with the model parameter $\phi$.
  \begin{equation}
      \begin{aligned}
          q_{\phi}(z_{u} | x_{u}). \nonumber \\
      \end{aligned}
  \end{equation}
\end{problem}
As Figure \ref{fig:2} (a) shows, The fundamental assumption of the problem is that the users themselves (\textbf{U}) solely influence the user preferences in the feedback data (\textbf{P}). Numerous outstanding approaches have arisen to address the above problem, and their solutions can be uniformly summarised into one class of solutions, i.e., \ for a user, they assume the observed data is generated from the following distributions:
\begin{equation}
    \begin{aligned}
        p_{\theta}(x_{u}) = E \left[ \int p_{\theta}(x_{u} | z_{u})p_{\theta}(z_{u})dz_{u}\right].
    \end{aligned}
\end{equation}
Such approaches will restore the generation process of the data with outstanding performance, but they overlook the influence of confounders on feedback in user interactions. As a result, they cannot explain why the user feedback on ice cream shows opposite results due to wage, where wage acts as a confounder. Based on the above findings, we naturally consider incorporating confounders into the data generation process as a more reasonable modeling approach. Firstly, we redefine the problem as follows: \par
\begin{problem}
  Given the historical interaction data $x_{u} = \{x_{u, 1}, ..., x_{u, n}\}$, we can predict the user preference $z_{u}, c_{u}$ with the model parameter $\phi$.
  \begin{equation}
      \begin{aligned}
          q_{\phi}(z_{u}, c_{u} | x_{u}). \nonumber \\
      \end{aligned}
  \end{equation}
\end{problem}
A graphical representation is depicted in Figure \ref{fig:2} (b), illustrating that the inner preferences of users and other confounders influence the observed user preferences in feedback data. We must emphasize that we are exclusively considering unobserved confounders in this problem. For observable confounders, we prefer incorporating them as additional input to enhance the performance of models, such as POI information in POI recommendations. In a similar vein, we propose a feasible solution to the aforementioned problem as follows: \par
\begin{equation}
    \label{eq:2}
    \begin{aligned}
        p_{\theta}(x_u) = E_{p_{\theta}(c_{u})} \left[ \int p_{\theta}(x_{u} | z_{u},c_{u})p_{\theta}(z_{u})dz_{u}\right].
    \end{aligned}
\end{equation}
By employing this approach, we can effectively disentangle user preferences from confounders, thus overcoming the limitations of conventional methods. For example, we can use the do-operation to predict user interactions in their current environment, enabling us to answer a counterfactual question such as "Will users prefer ice cream if the weather is warm?". Even when the labels of the relevant confounders are unknown, we can obtain purer representations of users' preferences than composite entities. \par
\subsection{Causal Modeling of Unobserved Confounders}
When users interact with items, they are inevitably influenced by confounders such as weather, location, etc. Hence, many models that utilize additional information, such as location data as supplementary input, often demonstrate better predictive performance. However, observable confounders represent only a small fraction of this vast population. In most cases, confounders are unobservable. Therefore, disentangling the impact of unobservable confounders from feedback data remains challenging. To address this issue, we start by making a mild assumption on the independence of the confounders and give corresponding proofs:
\begin{assumption}
    Given confounders $\mathbf{C} = \{c_{1}, c_{2}, ..., c_{k}\}$, we assume that is independent of the user.
    \begin{equation}
        \begin{aligned}
            c_{j} \perp u_{i}, \hspace{0.2cm} i \in [1, m], j \in [1, k]. \nonumber\\
        \end{aligned}
    \end{equation}
    \label{asu:1}
\end{assumption}
One premise for this assumption is that the data collection environment for user feedback is consistent. This condition is often easily met in recommender systems, where data is sourced from historical interactions within a specific platform over a certain period. Hence, this assumption is reasonably mild and applies to most recommendation scenarios. With this assumption, we can separate the confounders from the inherent preferences of users, thus achieving the model architecture as depicted in Eq \ref{eq:2}.\par
Next, we will provide the corresponding proof of Assumption \ref{asu:1}:
\begin{proof}
    In our paper, the confounder set \textbf{C} includes various unobserved confounders, such as wage. The representation of confounders is not dependent on specific users; thus, the exogenous variables of \textbf{C} are not dependent on the exogenous variables of \textbf{U}. Similarly, the user's inherent preference \textbf{U} will not change due to the point of interest (poi) or other confounders. Thus, the exogenous variables of \textbf{U} are not dependent on the exogenous variables of \textbf{C}. Given the exogenous variables \textbf{$E_U$} $\rightarrow$ \textbf{U} and \textbf{$E_C$} $\rightarrow$ \textbf{C}:
    \begin{equation}
        \begin{aligned}
            \mathbf{\mathrm{E_U}} \perp \mathbf{\mathrm{E_C}}. \nonumber
        \end{aligned}
    \end{equation}
    In the causal graph shown in Figure \ref{fig:2} (b), \textbf{U}$\rightarrow$\textbf{P}$\leftarrow$\textbf{C} is a collider. According to the characteristics of colliders discussed in Section 2.3 of Pearl's book \cite{pearl}, when the condition of independence of the exogenous variables of \textbf{C} and \textbf{U} is met, \textbf{C} and \textbf{U} are independent.
\end{proof}
\textcolor{black}{One of the constraints for the validity of Assumption \ref{asu:1} is that the exogenous variables of the confounders must be independent of the users, which means our model can only account for confounders that do not rely on the user, such as location, weather, etc. However, confounders closely related to the user, such as social relationships and background, cannot be included in our model.}\par.
From this, it can be inferred that the assumption of independence between \textbf{C} and \textbf{U} in the paper is reasonable. In our model, the representations of confounders and user preferences are extracted using different neural networks through feedback data, satisfying the independence assumption mentioned above. Additionally, based on the characteristics of the collider, we can infer that when we condition the preference in the training set, \textbf{C} and \textbf{U} are likely to be dependent. This observation explains the intuition that \textbf{C} and \textbf{U} are correlated and provides theoretical evidence for this correlation. We have noted this characteristic and made special design considerations in our methodology to address this aspect. Thus, we did not directly use the representation of confounders for modeling \textcolor{black}{(randomly initialized embedding representation of confounders in \cite{dag,ref03,sem3})}, but instead employed local and global causal graphs to capture this correlation (details in Section \ref{sec:3}).\par
\begin{figure}[!tb]
    \centering
    \includegraphics[width = 0.6\linewidth]{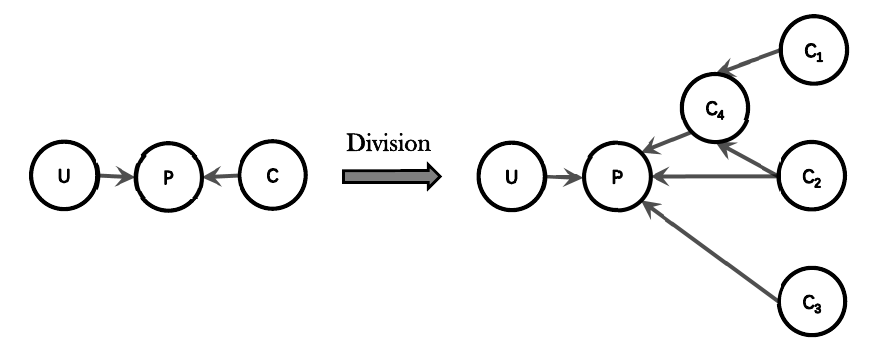}
    \caption{An example showing confounders disentanglement in latent space, where the confounders \textbf{C} are further dismantled to k concepts $\{c_1,c_2,...,c_k\}$ based on the given causal graph.}
    \label{fig:3}
\end{figure}
Moreover, the relationships among confounders are not independent and follow a causal structure represented by a specific causal graph. For instance, consider two confounders: location and item category. The category of items depends on the nature of the location (e.g., $\mathit{location} \rightarrow \mathit{item\ category}$); for example, clothes will be sold in a shopping mall and will not appear in a library. To formalize the causal representation, we consider $k$ confounders in the data. The confounders are causally structured by a Directed Acyclic Graph (DAG) with an adjacency matrix \textbf{A}. For convenience, we adopt a linear Structural Causal Model (SCM) as previous research \cite{dag,ref03} to model the relationship between confounders and the causal graph, as illustrated in Eq \ref{eq:3}:

\begin{equation}
    \label{eq:3}
    \begin{aligned}
        \mathbf{\mathrm{C}} = \mathbf{\mathrm{A}}^{\top}\mathbf{\mathrm{C}} + \epsilon = (I - \mathbf{\mathrm{A}}^{\top})^{-1}\epsilon, \\
    \end{aligned}
\end{equation}
where \textbf{A} is the parameter to be learned by our model and is initialized by the all-one matrix, $\epsilon$ is independent Gaussian noise acting as the exogenous variables of the confounders, and \textbf{C} is a structured causal representation of the $k$ confounders generated by a DAG. This approach can further disentangle the confounders based on the causal graph \textbf{A}, as depicted in Figure \ref{fig:3}. As expected, nonlinear SCM is more suitable for complex scenarios like recommender systems than linear ones. Therefore, in our practical deployment, we utilize the nonlinear SCM, which will be further elucidated in Section \ref{sec:3}.

\subsection{Causal Structure Learning}
\label{sec:3}
As mentioned before, an accurate causal graph enables us to better capture the influence of confounders on user preferences and the dependencies among these confounders. In conventional causal graphs, a causal flow between nodes is typically represented by an adjacency matrix, and the weights in this matrix measure the influence of parent nodes on their respective child nodes. However, when applied in recommendation system scenarios, this approach falls short of capturing the heterogeneity among users.\par
\textcolor{black}{
Consider two different users in the context of music recommendation. One user enjoys listening to different types of music in different locations (e.g., quiet music in the library), while the other prefers the same type of music regardless of location. The impact of location on these users differs: for the former, location influences their music preferences, while the latter remains unaffected by it. Although the location-music causal relationship can be inferred from the global causal graph, using this global graph alone to predict the impact of location on the music preferences of these two users would lead to suboptimal recommendations for the latter. To address this issue, we employ a combination of global and local causal graphs to capture the influence of confounders on user preferences. The global causal graph captures as many causal relationships as possible within the given constraints, and the local causal graph optimizes recommendation performance by masking the irrelevant relations of the global graph that do not impact the current user.}
\subsubsection{\textbf{Global SCM}} Specifically, we use the global causal graph to model the relationships among all confounders. Given the adjacency matrix $\mathcal{G}^{global}$, it is associated with the true causal graph. In this context, $\mathcal{G}^{global}_{ij}$ can be viewed as an indicator vector, where $\mathcal{G}^{global}_{ij} = 1$ signifies that node $i$ is the parent node of node $j$, indicating that node $j$ is influenced by node $i$. In contrast, $\mathcal{G}^{global}_{ij} = 0$ implies that node $j$ and node $i$ are unrelated. The global causal graph is primarily used to capture dependencies among confounders without focusing on the strength of the dependencies between any two dependent confounders. \par
Therefore, we only require a binary adjacency matrix to meet this need. We begin by initializing an all-one learnable adjacency matrix $\mathcal{G}^{global}$, which is subsequently binarized to meet the specified requirements. To make the binary operation continuous, we leverage the $Gumbel$-$Softmax$ to get the binary adjacency matrix, which gives a continuous approximation to sampling from the categorical distribution \cite{other1,other2,mrcrid}. We adopt a similar approach by adding Gumbel noise to the sigmoid function, which we formula as $Gumbel\text{-}Sigmoid$:
\begin{equation}
    \begin{aligned}
        Gumbel&\text{-}Sigmoid(\mathcal{G}^{global}) = \frac{exp((\mathcal{G}^{global} + \dot{g}) / \tau)}{exp((\mathcal{G}^{global} + \dot{g}) / \tau) + exp(\ddot{g} / \tau)},\\
    \end{aligned}
    \label{eq:4}
\end{equation}
where $\dot{g}$ and $\ddot{g}$ are two independent Gumbel noises, and $\tau \in (0, \infty)$ is a temperature parameter. As $\tau$ diminishes to zero, a sample from the $Gumbel\text{-}Sigmoid$ distribution becomes cold and resembles the one-hot samples. Our experiments show that a small fixed $\tau$ (e.g., 0.2) works well.

\begin{figure}[!tb]
    \centering
    \includegraphics[width = 0.8\linewidth]{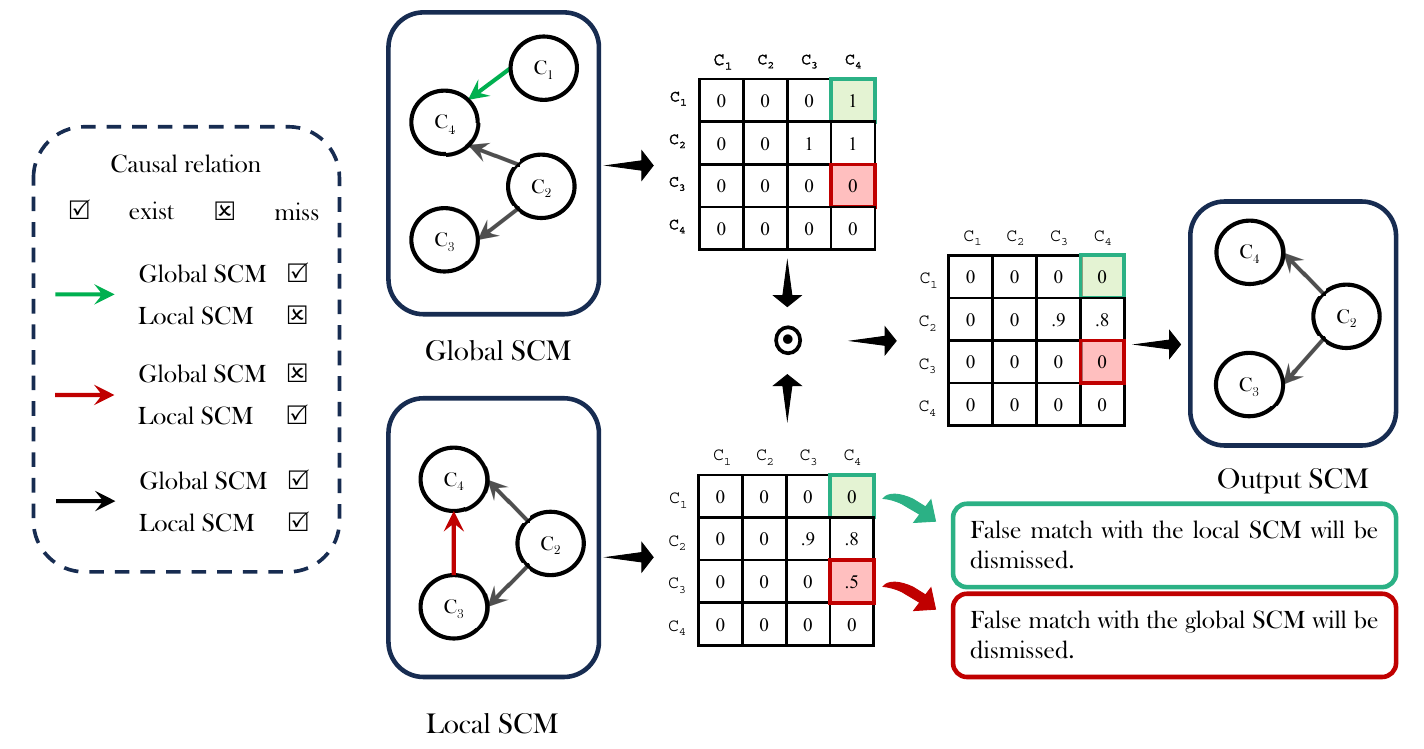}
    \caption{Example of a Global and Local Causal Structure.}
    \label{fig:4}
\end{figure}
\subsubsection{\textbf{Local SCM}} Once we obtain a usable global causal graph, we use local causal graphs to measure the strength of dependencies between confounders. The VAE-based models usually obtain the user preference representation in latent space from the user feedback data by encoders, and our model follows this process to obtain the mixed preference in feedback data:
\begin{equation}
    \begin{aligned}
        z = \mathsf{Encoder}(x_u),
    \end{aligned}
\end{equation}
where $z$ is the representation of mixed preference in feedback data. To separate the confounders from $z$, we use a nonlinear function to obtain the projection of $z$ in the space of confounders, which can be formulated as follows:
\begin{equation}
    \begin{aligned}
        c_i = \mathit{f_i}(z), \quad i\in[0,k],
    \end{aligned}
\end{equation}
where $c_i$ is the $i$-$th$ confounder and $\mathit{f_i}(\cdot)$ is the corresponding function. In practice, we use two linear layers to simulate. Then we can calculate the exogenous variables $\epsilon$ by transforming Eq. \ref{eq:3}:
\begin{equation}
    \begin{aligned}
        \epsilon = \{I - (\mathcal{G}^{global})^{\top}\}\mathbf{\mathrm{C}}.
    \end{aligned}
\end{equation}

In our model, the mapping function of confounders is shared among users. Therefore, to capture user specificity, we must introduce additional user-specific information to distinguish between users. The conventional approach involves using additional user embeddings as personalized user information. However, this leads to increased computational complexity and parameters in the model. To mitigate this drawback, we utilize a $\mathbf{\mathsf{MLP\ layer}}$ to map the encoder output to obtain personalized user preference without significantly increasing the temporal and spatial complexity of the model,
\begin{equation}
    \begin{aligned}
        \mathit{sInfo}_{u} = \mathsf{MLP}(z),
    \end{aligned}
\end{equation}
where $\mathit{sInfo_u}$ is the user-specific preference after MLP Layer, as Figure \ref{fig:5} shows. By incorporating additional user-specific information, we can obtain distinct representations of confounders for each user. Furthermore, we utilize an attention mechanism to calculate the strength of dependencies between confounders. In our experiments, we observed that the multi-head attention performs better as it effectively captures the heterogeneity among different confounders.
\begin{equation}
    \begin{aligned}
        \mathcal{G}^{local} = \mathsf{MultiHead}(\mathbf{\epsilon}, \mathit{sInfo_{u}}),
    \end{aligned}
\end{equation}
where the number of attention heads corresponds to the number of confounders. The $Multi$-$Head$ layer is illustrated in Figure \ref{fig:5}.

\subsubsection{\textbf{Causal Layer}} Given the global causal graph and local causal graphs, we perform calculations in the causal layer to obtain the final user-specific causal graph.
\begin{equation}
    \begin{aligned}
        \mathcal{G}^{u} = \mathcal{G}^{global} \odot \mathcal{G}^{local},
    \end{aligned}
\end{equation}
where $\odot$ is the element-wise multiplication. As Figure \ref{fig:4} shows, $\mathcal{G}^{u}_{ij} = 1$ if and only if both $\mathbf{\mathcal{G}^{global}_{ij} = 1}$ and $\mathcal{G}^{local}_{ij} \neq 0$ hold, which reveals two reasonable potential conditions. Firstly, the local causal graphs must adhere to the global causal graph. This means that any two confounders without a causal relationship in the global causal graph cannot establish causality through the local causal graphs. The global causal graph represents the true causal graph; thus, any causal relationship absent in the global causal graph, even if present in the local causal graphs, is considered erroneous and disregarded. Secondly, any two confounders without a causal relationship in the local causal graphs cannot influence the current user through the global causal graph. The local causal graphs capture the influence of confounders on the user. If there is no causal relationship between two confounders in the local causal graph, this causal pathway cannot influence the user. Hence, including such pathways would affect the final performance and is therefore not considered. Once we obtain the final causal graph, we can derive reconstructed causal representations of the confounders according to Eq \ref{eq:3}:
\begin{equation}
    \begin{aligned}
        \hat{c}_{i} &= g_{i}(\mathcal{G}^{u}_{i} \odot \{I - (\mathcal{G}^{global})^{\top}\}^{-1}\epsilon),\\
        \mathbf{\hat{C}} &= \{\hat{c}_{1}, \hat{c}_{2},...,\hat{c}_{k}\},
    \end{aligned}
    \label{eq:8}
\end{equation}
where $g_{i}(\cdot)$ is a mild nonlinear function \textcolor{black}{(\(sigmoid(\cdot)\) in our experiments), which is less sensitive to input variations, thereby facilitating more efficient parameter updates during the training process}. For any confounder $c_{i}$, $\odot$ represents considering only the influence of its parent nodes, excluding the influence of other irrelevant nodes. 
\subsubsection{\textbf{Mask Layer}} "do-operation" is a commonly used tool in causal theory through which you can tell us something about counterfactual problems, for example, `Will users prefer ice cream if the weather is warm?'. This mask layer can implement the "do-operation." We only need to provide an additional mask $\mathcal{G}^{mask}$, where $\mathcal{G}^{mask}_{ij} = 0$ indicates excluding the influence of node $i$ on node $j$ in the do-operation. Given the mask graph $\mathcal{G}^{mask}$, the reconstructed causal representations can derived from:
\begin{equation}
    \begin{aligned}
        &\mathcal{G}^{u\mbox{-}masked} = \mathcal{G}^{u} \odot \mathcal{G}^{mask},\\
        &\hat{c}_{i} = g_{i}(\mathcal{G}^{u\mbox{-}masked}_{i} \odot \{I - (\mathcal{G}^{global} )^{\top}\}^{-1}\epsilon).\\
    \end{aligned}
    \label{eq:8}
\end{equation}
If no explicit $\mathcal{G}^{mask}$ as input, $\mathcal{G}^{mask}$ is set as an all-one matrix to enhance the causal relation flow between confounders.
\subsubsection{\textbf{Identification of the Learned Graph}} As shown in Figure \ref{fig:5}, we will utilize a parametric model like Variational Autoencoder (VAE), combined with a $k \times k$ binary adjacency matrix, to fit the observed data. Unsupervised learning of the model might be infeasible due to the identifiability issue as discussed in \cite{ref01,ref02,ref03}. To demonstrate the identifiability of the learned graph, we prove that under appropriate conditions, the computation described above can lead to the recognition of the hypergraph of the true graph. Consider a marginal distribution $P(\textbf{C})$ induced by a Structural Equation Model (SEM) defined in Eq \ref{eq:3} with Directed Acyclic Graph (DAG) $\mathcal{G}$, and our SEM Eq \ref{eq:8} induces the same marginal distribution, where the binary adjacency matrix represents a DAG $\mathcal{H}$, we can obtain Lemma \ref{lamma:1} if the $g_{i}(\cdot)$ is not a constant function and its proof.
\begin{lamma}
    $\mathcal{H}$ is a super\text{-}graph of $\mathcal{G}$, i.e., all the edges in $\mathcal{G}$ also exist in $\mathcal{H}.$
    \label{lamma:1}
\end{lamma}
\begin{proof}
First, let's consider the case where the $g_{i}(\cdot)$ is a constant, w.r.t. $c_{j}$ whether the $\textbf{A}_{ji} = 0 \enspace\text{or} \enspace 1$ do not affect $c_{i}$, but will change the causal graph $\mathcal{H}$, so when $g_{i}(\cdot)$ is constant, we can not uniquely identified the $\mathcal{H}$ from $P(\textbf{C})$. Fortunately, in recommender systems, $g_{i}(\cdot)$ usually satisfies the non-constant condition. Then, we restrict $g_{i}$ to be non-constant, w.r.t. all $c_{j}$, $j \neq i$ to meet the causal minimality condition. \par
It suffices to show that if $c_{j}$ is not a parent of $c_{i}$ in $\mathcal{H}$, then $c_{j}$ is not a parent of $c_{i}$ in $\mathcal{G}$, either. That $c_{j}$ is not a parent of $c_{i}$ in $\mathcal{H}$ indicates $\textbf{A}_{ji} = 0$. Therefore, $g_{i}(\textbf{A}_{i} \odot \textbf{C})$ is a constant function w.r.t. $c_{j}$. For the reduced SEM with functions $g_{i}$’s and causal DAG $\mathcal{G}$, we conclude that $c_{j} \notin c_{pa_{i}}$ and the input arguments of $g_{i}$ do not contain. Thus, $c_{j}$ cannot be a parent of $c_{i}$ in $\mathcal{G}$.
\end{proof}
As  Peters' book \cite{peters14a}[Theorem 27] shows, if the $P(\textbf{C})$ is generated by a restricted additive noise model (ANM), the true causal graph is identifiable. Thus, we further assume a restricted ANM for the data-generating procedure to ensure the true causal graph $\mathcal{G}$ is identifiable. We then obtain the following proposition with proof.
\begin{proposition}
    Assume a restricted ANM with graph $\mathcal{G}$ and distribution $P(\textbf{C})$ so that the original SEM is identifiable. If the parameterized SEM in the form of Eq. \ref{eq:8} with graph $\mathcal{H}$ induces the same $P(\textbf{C})$, then $\mathcal{H}$ is a super-graph of $\mathcal{G}$.
\end{proposition}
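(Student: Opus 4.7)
The plan is to combine Lemma \ref{lamma:1} with the identifiability result for restricted additive noise models from Peters [Theorem 27]. First I would observe that under a restricted ANM the causal graph $\mathcal{G}$ generating $P(\mathbf{C})$ is uniquely identifiable: that is Peters' theorem, which the statement explicitly invokes. So the proof reduces to showing that any alternative SEM which reproduces $P(\mathbf{C})$ and whose dependency structure is encoded by some DAG $\mathcal{H}$ must, as a graph-theoretic object, contain every edge of $\mathcal{G}$.

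Next I would argue that the parameterized SEM of Eq.~\ref{eq:8} falls inside the scope of Lemma \ref{lamma:1}. The only hypothesis of Lemma \ref{lamma:1} beyond the shared marginal is that each $g_{i}(\cdot)$ is non-constant with respect to each of its arguments. In our model $g_{i}$ is realized by a sigmoid composition acting on $\mathcal{G}^{u}_{i} \odot \{I - (\mathcal{G}^{global})^{\top}\}^{-1}\epsilon$; the sigmoid is strictly monotone, the matrix $(I - (\mathcal{G}^{global})^{\top})^{-1}$ is well defined because the DAG constraint forces $\mathcal{G}^{global}$ to be nilpotent under a topological ordering, and the argument-wise non-constancy therefore holds whenever the corresponding entry of $\mathcal{G}^{u}_{i}$ is nonzero. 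The causal minimality condition is thus satisfied.

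With these ingredients I would then invoke Lemma \ref{lamma:1} verbatim: since the two SEMs induce the same $P(\mathbf{C})$ and each $g_{i}$ is non-constant in its arguments, every edge of $\mathcal{G}$ is also an edge of $\mathcal{H}$, so $\mathcal{H}$ is a super-graph of $\mathcal{G}$. What the restricted ANM assumption adds on top of Lemma \ref{lamma:1} is the guarantee that $\mathcal{G}$ itself is the \emph{unique} ground-truth DAG realizing $P(\mathbf{C})$; without this, the super-graph conclusion would only refer to some data-generating DAG rather than the true one. So the proposition is essentially Lemma \ref{lamma:1} upgraded by Peters' identifiability theorem.

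The main obstacle I anticipate is the first technical verification, namely showing that the combined global–local mechanism $\mathcal{G}^{u} = \mathcal{G}^{global} \odot \mathcal{G}^{local}$ together with the Gumbel–Sigmoid relaxation genuinely fits Peters' definition of a restricted ANM at the population level. In particular one has to argue (i) that as $\tau \to 0$ the Gumbel–Sigmoid output collapses to an honest binary adjacency matrix so $\mathcal{H}$ is a well-defined DAG, and (ii) that the user-specific local mask does not create effective structural equations violating acyclicity or the additive-noise form shared across users. Once these regularity points are handled, the remainder of the argument is a direct application of Lemma \ref{lamma:1} and the cited theorem.
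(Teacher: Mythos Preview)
Your proposal is correct and follows essentially the same route as the paper: invoke Peters' identifiability theorem for restricted ANMs to pin down $\mathcal{G}$ uniquely, then apply Lemma~\ref{lamma:1}. The paper's own proof is three sentences—it recalls that the reduced SEM constructed inside the proof of Lemma~\ref{lamma:1} satisfies causal minimality and shares $P(\mathbf{C})$, cites Peters to conclude that the reduced graph coincides with $\mathcal{G}$, and then applies Lemma~\ref{lamma:1}.

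One point of over-worry: your ``main obstacle'' paragraph is misplaced. The restricted-ANM hypothesis in the proposition is imposed on the \emph{true} data-generating SEM (the one with graph $\mathcal{G}$), not on the parameterized SEM of Eq.~\ref{eq:8}. You therefore do not need to argue that the global--local mechanism $\mathcal{G}^{u} = \mathcal{G}^{global}\odot\mathcal{G}^{local}$ or the Gumbel--Sigmoid relaxation produces a restricted ANM at the population level; the proposition already hands you (i) a restricted ANM generating $P(\mathbf{C})$ and (ii) a parameterized SEM with some DAG $\mathcal{H}$ reproducing that distribution. The only model-side requirement is the non-constancy of each $g_i$ in its active arguments, which you address and which the paper treats as inherited from the proof of Lemma~\ref{lamma:1}. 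The $\tau\to 0$ and acyclicity concerns you raise belong to the optimization discussion, not to this proposition.
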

\begin{proof}
    Recall that the reduced SEM with $g_{i}$’s and graph $\mathcal{G}$ satisfies the causal minimality condition and has the same distribution $P(\textbf{C})$. With the identifiability result of restricted ANMs \cite{peters14a}, we know that $\mathcal{G}$ is identical. Applying Lemma \ref{lamma:1} completes the proof.
\end{proof}
Then, we can apply a parametric model and a binary adjacency matrix to fit the SEM in Eq. \ref{eq:3}. Suppose the causal relationships fall into the chosen model functions, and we can obtain the exact solution that minimizes the negative log-likelihood given infinite samples. In that case, the resulting SEM has the same distribution \cite{ref04}. Consequently, we obtain an acyclic supergraph from which existing nonlinear variable selection methods can be used to learn the parental sets and the causal graph.

\subsubsection{\textbf{Mix Layer}} With the reconstructed causal representation of confounders \(\mathbf{\hat{C}}\) and user-specific preference, we can drive the mixed preference representation of feedback data in latent space. For \(k\) confounders, a user may be influenced by some rather than all. To retain this characteristic, we take advantage of the attention mechanism to model the confounder's influence on the user. Specifically, we calculate the attention score as follows:
\begin{equation}
    \begin{aligned}
        \mathit{Q} = \mathit{f}(\mathsf{Norm}(\mathit{sInfo_{u}})),\quad\mathit{K} &= \mathit{g}(\mathsf{Norm}(\mathbf{\hat{C}})), \quad\mathit{V} = \mathit{h}(\mathbf{\hat{C}}),\\
        score &= \mathsf{Softmax}(\frac{\mathit{Q}\mathit{K}^{\top}}{\sqrt{d}}),\\
        \mathbf{{C}^{u}} &= score \cdot \mathit{V},
    \end{aligned}
\end{equation}
where \(\mathit{f}\), \(\mathit{g}\) and \(\mathit{h}\) are learnable linear layers, \(\mathsf{Norm}(\cdot)\) means normalization, \(d\) is the latent embedding size, and \(\mathbf{{C}^{u}}\) is the representation of confounders influence on user \(u\). Then we mix the \({sInfo_{u}}\) and \(\mathbf{{C}^{u}}\) to get the reconstructed mixed user preference representation of feedback data in latent space as follows:
\begin{equation}
    \begin{aligned}
        \hat{z} = \mathsf{FFN}(\mathit{sInfo_{u}} +  \mathbf{{C}^{u}}),
    \end{aligned}
\end{equation}
where \(\mathsf{FFN}(\cdot)\) denotes the feed-forward layer, but in practice, we found simply add \(\mathit{sInfo_{u}}\) and  \(\mathbf{{C}^{u}}\) is enough. The Mix layer combines the output of user-specific preferences with the output of the causal layer - the causal representations of confounders, resulting in a user representation under the influence of confounders. This user representation is then input into the decoder to reconstruct observed data.\par
\begin{figure}[!tb]
    \centering
    \includegraphics[width= 0.9\linewidth]{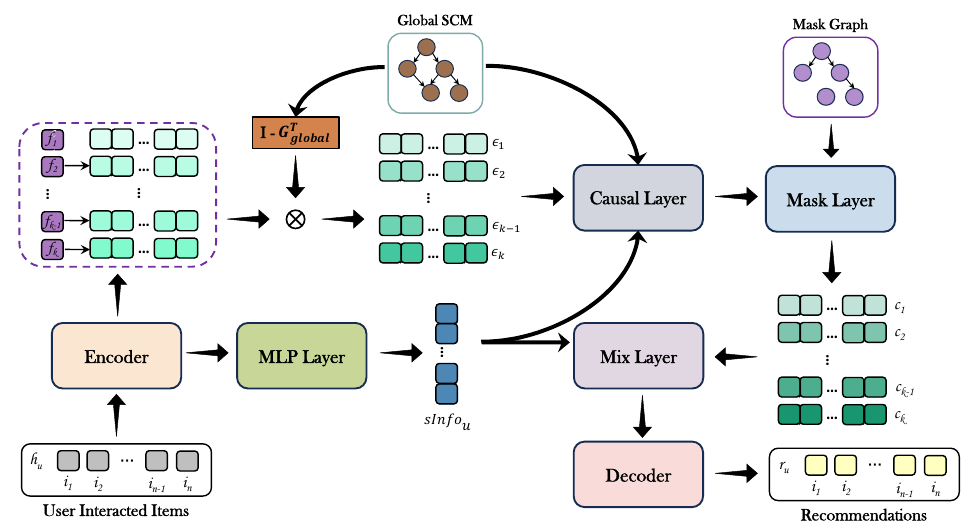}
    \caption{The architecture of CSA-VAE.}
    \label{fig:5}
\end{figure}
\section{Inference and Learning Strategy}
The overall architecture of our model is depicted in Figure \ref{fig:5}. This section describes the inference stage to make recommendations and the training stage to train our model to learn user preferences and causal graphs simultaneously.\par
\subsection{Inference process}
Our model CSA-VAE can explicitly model the unobserved confounders and user-specific preference, thus choosing whether to incorporate the influence of confounders.
\subsubsection{\textbf{Recommendation with confounders}} Given the reconstructed mixed user preference representation \(\hat{z}\), we can get the predicted user-item relevance score as follows:
\begin{equation}
    \begin{aligned}
        \{\hat{x_1},\hat{x_2},...,\hat{x_n}\} = \mathsf{Decoder}(\hat{z}).
    \end{aligned}
\end{equation}
\subsubsection{\textbf{Recommendation without confounders}} We use \(\mathit{Mask Layer}\) to completely Shielding from the influence of confounders with the all-zero \(\mathcal{G}^{mask}\), the simplified formula is as follows:
\begin{equation}
    \begin{aligned}
        \{\hat{x_1},\hat{x_2},...,\hat{x_n}\} = \mathsf{Decoder}(\mathsf{FFN}(\mathit{sInfo_{u}})).
    \end{aligned}
\end{equation}
\subsubsection{\textbf{User Controllable Recommendation}}
\textcolor{black}{The causal graph and the learned representations of the confounders allow users to interactively adjust the representation of preferences of recommender systems learned from their past behavior to better align with their current preferences. In this paper, we formalize the task of user-controllable recommendation.}\par
\textbf{Task Definition of User Controllable Recommendation.}  \textcolor{black}{Let \( x_u = \{ x_{u, 1}, x_{u, 2}, \dots, x_{u, n} \} \) represent the historical interaction data of user \( u \), where each \( x_{u, i} \) corresponds to a specific past interaction between the user and an item. Let \( z \) be the learned mixed preference representation of the user derived from their past behavior, and \( c_u \) represent the latent confounders that may influence the user's preferences. The user-controllable recommendation task is designed in the recommendation model to allow users to adjust the learned preference representation \( z \) to align their current intentions or preferences.}\par
\textbf{Solution based on CSA-VAE.} \textcolor{black}{Our model CSA-VAE incorporates confounders \( c_u \) that may capture external influences on users' preferences. We model confounders as latent variables, and the user can adjust their influence interactively. Users can modify the confounder representations \( c_u \) to better align with their current preferences as follows:}
\begin{equation}
    p_{\theta}(x_u) = E_{p_{\theta}(c_{u})} \left[ \int p_{\theta}(x_{u} | z_{u}, do(c_{u}^{i}), c_{u}^{j})p_{\theta}(z_{u})dz_{u}\right], i,j \in [1,k] \enspace and\enspace i \neq j,
\end{equation}
 \textcolor{black}{where \(do(\cdot)\) represents a causal intervention, meaning that the confounders \( c_u^i \) are manipulated or set to specific values during the recommendation process. ``do-operation'' is a key feature of causal inference, allows us to model the influence of changing these confounders on the user's preferences. The ability to manipulate confounders in this way gives the user control over how these confounders influence their current mixed preferences and the generated recommendations.}
\subsection{Training process}
\subsubsection{\textbf{Evidence Lower Bound}} Once we obtain the latent representation \(\hat{z}\) of users' mixed preferences in the feedback data, we can reconstruct the observed data as follows:
\begin{equation}
    \begin{aligned}
        p_{\theta}(x_{u} | \hat{z}) = p_{\theta}(x_{u}, \hat{z} | \mathit{sInfo_u}, \textbf{C}).
    \end{aligned}
\end{equation}
We follow the variational autoencoder (VAE) paradigm \cite{vae} and optimize $\theta$ by maximizing the lower bound $\sum_u \ln p_\theta(x_u)$, where $\ln p_\theta(x_u)$ is bounded as follows:
\begin{equation}
    \begin{aligned}
        \ln p_{\theta}(x_{u}) \geq E_{q(z,|x_u,\mathit{sInfo_u},\textbf{C})}\left[\ln p(x_{u} | z, \mathit{sInfo_u}, \textbf{C})]\right. - D_{KL}(q(z|x_u,\mathit{sInfo_u}, \textbf{C}) \| p(z|\mathit{sInfo_u},\textbf{C})).\\
    \end{aligned}
\end{equation}
The relevant proofs are provided in Appendix \ref{elbo}.
\subsubsection{\textbf{Constraints of the Causal Graph}} \textcolor{black}{Causal inference requires discovering the underlying causal structure between variables. A DAG is commonly used to represent such causal relationships because it naturally encodes the assumption that causality flows in one direction and does not form loops. Thus, the causal adjacency matrix $\mathbf{A}$ is constrained to be a DAG. We employ a continuous distinguishable constraint function instead of the traditional combinatorial DAG constraint \cite{dag}. This function attains zero if, and only if the adjacency matrix $\mathbf{A}$ corresponds to a DAG \cite{dag}:}
\begin{equation}
    \begin{aligned}
        \mathbf{H}(\mathbf{A}) = tr((I + \frac{c}{k}\mathbf{A} \odot \mathbf{A})^{k}) - k = 0,
    \end{aligned}
    \label{eq:19}
\end{equation}
\textcolor{black}{where c is an arbitrary positive number, controls the spectral radius of the graph, and ensures that the adjacency matrix does not lead to significant or unstable eigenvalues. $k$ is the number of confounders. The value of $c$ is the spectral radius of \textbf{A}, and due to nonnegativity, it is bounded by the maximum row sum by the Perron-Frobenius theorem.}

\subsubsection{\textbf{Diversity constraint of confounders}} To ensure the diversity and comprehensiveness of confounders, we incorporate a similarity penalty in the loss function to guide the diversity of parameters. This method effectively prevents the model from focusing too narrowly on specific patterns, thereby enhancing its generalization capability. By employing this penalty mechanism, the model can explore a broader parameter space during training, resulting in more comprehensive and diverse representations of confounders. The formula is as follows:
\begin{equation}
\centering
    \begin{aligned}
        \mathcal{L}_{sim} &= \sum^{k}_{i}\sum^{k}_{j}Cosine\mbox{-}Similarity(\mathsf{Norm}(\epsilon_i),\mathsf{Norm}(\epsilon_j)) \quad (i \neq j).\\
    \end{aligned}
    \label{eq:20}
\end{equation}
\subsubsection{\textbf{Objective function}} The training procedure of our model reduces to the following constrained optimization:
\begin{equation}
    \begin{aligned}
        maximize& \quad \ln p_{\theta}(x_{u}) \geq E_{p_{\theta}(\textbf{C)}}\left\{ E_{q_{\theta}(z_{u} | x_{u}, \textbf{C})}[\ln p_{\theta}(x_{u} | z_{u}, \textbf{C})]\right.\\
        &- D_{KL}(q_{\theta}(z_{u} | x_{u}, \textbf{C}) \| p_{\theta}(z_{u}))\left\}\right.,\\
        s.t.& \quad (\ref{eq:19}),\\
        s.t.& \quad (\ref{eq:20}).
    \end{aligned}
\end{equation}
By the lagrangian multiplier method, we have the new loss function:
\begin{equation}
    \begin{aligned}
        \mathcal{L} = -\textbf{ELBO} + \textbf{H}(\mathcal{G}^{global}) + \textbf{H}(\mathcal{G}^{local}) + \mathcal{L}_{sim}.
    \end{aligned}
\end{equation}
% where $\alpha$ denotes the regularization hyperparameter to control the number of edges in the \(\mathcal{G}^{global}\).
\section{Experiments}
In this section, we present the extensive experiments conducted on one semi-simulated dataset and five real-world datasets to demonstrate the effectiveness of the proposed CSA-VAE, with an emphasis on answering the following research questions:\par
\begin{itemize}
    \item \textbf{RQ 1}: Can CSA-VAE obtain a useful causal graph of confounders without relevant labels? How does the strength of the causal relationship obtained compare to the true value?\par
    \item \textbf{RQ 2}: Can CSA-VAE achieve better performance compared to other baselines? How about the performance of CSA-VAE using only user preferences for recommendations? \par
    \item \textbf{RQ 3}: How do the causal relationships of confounders enhance the model's performance?\par
    \item \textbf{RQ 4}: How do the number of confounders influence the performance of CSA-VAE? Is the greater the number of confounders, the better? \par
\end{itemize}
\subsection{Dataset}
It is difficult to verify the effectiveness of the causal graphs learned by CSA-VAE as the information of confounders is unobserved in the real dataset. Thus, we first validate the effectiveness of causal graphs on synthetic datasets and subsequently evaluate the recommendation performance of CSA-VAE on real-world datasets.
\begin{table}
    \centering
    \resizebox{0.7\linewidth}{!}{%
\begin{tabular}{@{}llrrrc@{}}
\toprule
Dataset  & Type                           & \#Interactions & \#User  & \#Items & Sparsity \\ \midrule
Coat     & \multirow{4}{*}{Full-observed} & 11,600         & 290     & 300     & 86.67\%    \\
Reasoner &                                & 58,497         & 2,997   & 4,672   & 99.58\%    \\
Yahoo!R3 &                                & 365,704        & 15,400  & 1,000   & 97.63\%    \\
\rowcolor{white}Kuairec  &                                & 12,530,806     & 7,176   & 10,728  & 83.72\%    \\
\midrule
Epinions & \multirow{6}{*}{Normal}        & 188,478        & 116,260 & 41,269  & 99.99\%    \\
ML-100k  &                                & 100,000        & 943     & 1,682   & 93.69\%    \\
ML-1M    &                                & 1,000,209      & 6,040   & 3,706   & 95.54\%    \\
ML-10M   &                                & 10,000,054     & 69,878  & 10,677  & 98.66\%    \\
ML-20M   &                                & 20,000,263     & 138,493 & 26,744  & 99.46\%    \\
ML-25M   &                                & 25,000,096     & 162,542 & 59,048  & 99.74\%    \\ \bottomrule
\end{tabular}}
    \caption{Statistics of the datasets}
    \label{tab:1}
\end{table}
\subsubsection{\textbf{Synthetic Dataset.}} We conduct experiments on synthetic data generated by the following process: We first assume that users are influenced by four confounders, where the exogenous variables for each confounder are generated by sampling from Gaussian distributions with mean and variance sampled from uniform distributions $[-3, 3]$ and $[0.01, 4]$, respectively. The intrinsic preferences of users are sampled from a standard normal distribution. Given the user's preference value $\mathcal{U}$, we obtain the user's personalized weights $w$ by sampling from a Poisson distribution. We generate samples using the causal structure model as shown in Eq \ref{eq:2}. Finally, we input the confounders and user preferences into a two-layer MLP to generate the final observed value $\mathcal{X}$. Additional details and a formal description can be found in the Appendix \ref{sydata}.
\subsubsection{\textbf{Real-World datasets}} To comprehensively and fairly validate the effectiveness of the model, we conducted experiments using nine publicly available datasets that encompass a variety of recommendation scenarios (such as movies and clothes) and different densities. \textcolor{black}{\dataset{Coat}, \dataset{Yahoo R3}, \dataset{Reasoner}\footnote{\href{https://reasoner2023.github.io/}{https://reasoner2023.github.io/}} \cite{reasoner}, and \dataset{Kuairec}\footnote{\href{https://kuairec.com/}{https://kuairec.com/}} \cite{kuairec} have fully observed data as the test set with the ground-truth relevance information. Following prior works, we binarize the ratings in \dataset{Yahoo R3} and \dataset{Coat} by setting ratings $\geq$ 4 to 1 and the rest to 0. For \dataset{Reasoner}, we set ratings $\geq$ 4 to 1 and the rest to 0 as the regular train-test set and use the true user preference label like-unlike as the test set for evaluating the performance of models in capturing real user preference. For \dataset{Kuairec}, we use the sparse dataset for the train and the dense dataset for the test; the rating is binarized based on the ratio of user watching ratio. Specifically, setting watching ratio $\geq$ 2 to 1 and the rest to 0.} We select five datasets of varying sizes ranging from 100k to 25M: \dataset{ML}-\dataset{100K}, \dataset{ML}-\dataset{1M}, \dataset{ML}-\dataset{10M}, \dataset{ML}-\dataset{20M} and \dataset{ML}-\dataset{25M} collected from the MovieLens website\footnote{\href{https://grouplens.org/datasets/movielens/}{https://grouplens.org/datasets/movielens/}} to validate the robustness of the model to the dataset size. Additionally, we leverage the \dataset{Epinions} dataset, which originates from Epinions.com, a website where users can write reviews on various products and services and also rate the reviews written by other users. Following prior works, \cite{heng1,heng2}, we remove the "inactive" users who interact with fewer than 20 items and the "unpopular" items who have interacted with users less than 10 times. We split the dataset into 70\% for training, 20\% for testing, and the remaining for validation. All user ratings greater than or equal to four are set to 1, while the rest are set to 0.
\subsubsection{\textbf{Baselines}} We compare our method with the corresponding base models and the state-of-the-art de-confounding methods that can alleviate the confounding bias in recommender systems in the presence of unobserved confounders. 
\begin{itemize}
    \item \textbf{MF} \cite{mf}: MF is a popular technique used in recommendation systems to predict user preferences for items. It is particularly effective for collaborative filtering
    \item \textbf{Multi-VAE} \cite{mutivae}: Variational autoencoders (VAEs) to collaborative filtering for implicit feedback with VAE.
    \item \textbf{Muti-DAE} \cite{mutivae}:  variational autoencoders (VAEs) to collaborative filtering for implicit feedback with DAE. 
    \item \textbf{Macrid-VAE} \cite{mrcrid}: Achieves macro disentanglement by inferring the high-level concepts associated with user intentions while simultaneously capturing a user's preference regarding the different concepts. 
    \item \textbf{Rec-VAE} \cite{recvae}: RecVAE introduces several novel ideas to improve Mult-VAE. 
    \item \textbf{CDAE} \cite{cdae}: A novel method for top-N recommendation that utilizes the idea of Denoising Auto-Encoders. 
    \item \textbf{InvPref} \cite{invpref}: InvPref assumes the existence of multiple environments as proxies of unmeasured confounders and applies invariant learning to learn the user’s invariant preference. 
    \item \textbf{IDCF} \cite{icdf}: A general de-confounded recommendation framework that applies proximal causal inference to infer the unmeasured confounders and identify the counterfactual feedback with theoretical guarantees.
\end{itemize}
CSA-VAE focuses on learning a causal graph without the use of relevant labels. To ensure a fair comparison, we do not include models that incorporate additional information (e.g., movie names, categories), such as models SEM-MacridVAE \cite{sem1}, CaD-VAE \cite{sem2} and PlanRec \cite{sem3}.
\subsection{Experimental Settings}
\subsubsection{\textbf{Setups.}} We implement CSA-VAE and baselines in PyTorch. All models are trained with the Adam optimizer via early stopping at patience = 10. We set the learning rate to 1e-3 and the $l_2$-regularization weight to 1e-6. For CSA-VAE, we tune the hyper-parameter concepts $k$ in the range of $[1,2,4,8,16,32]$ for different datasets. To detect significant differences in CSA-VAE and the best baseline on each dataset, we repeated their experiments five times by varying the random seeds. We choose the average performance to report. All ranking metrics are computed at a cutoff K = $[10,30]$ for the Top-$k$ recommendation. Our implementation of the baselines is based on the original paper or the open codebase Recbole \cite{recbole[2.0]}.\par
\subsubsection{\textbf{Evaluation Metrics.}} Note that the sampling-based evaluation approach does not truly reflect the ability of the model to capture the true preferences of users. Simply fitting the data may also have better performance. To this end, we report the all-ranking performance w.r.t. two widely used metrics: Recall and NDCG cut at K = $[10, 30]$. To measure the popularity of recommended items, we use average popularity rank (AVP) as the other indicator, and the formula is:
\begin{equation}
    \begin{aligned}
        \mathrm{Average\ Popularity\ (AVP)@K}=\frac{1}{|U|} \sum_{u \in U } \frac{\sum_{i \in R_{u}} \phi(i)}{|R_{u}|}, \nonumber
    \end{aligned}
\end{equation}
where $\phi(i)$ is the ascending order of times item $i$ has been rated in the training set, $\mathbf{R}_{u}$ is the recommended list of items for user $u$.
\begin{figure}
    \centering
    \includegraphics[width=0.9\linewidth]{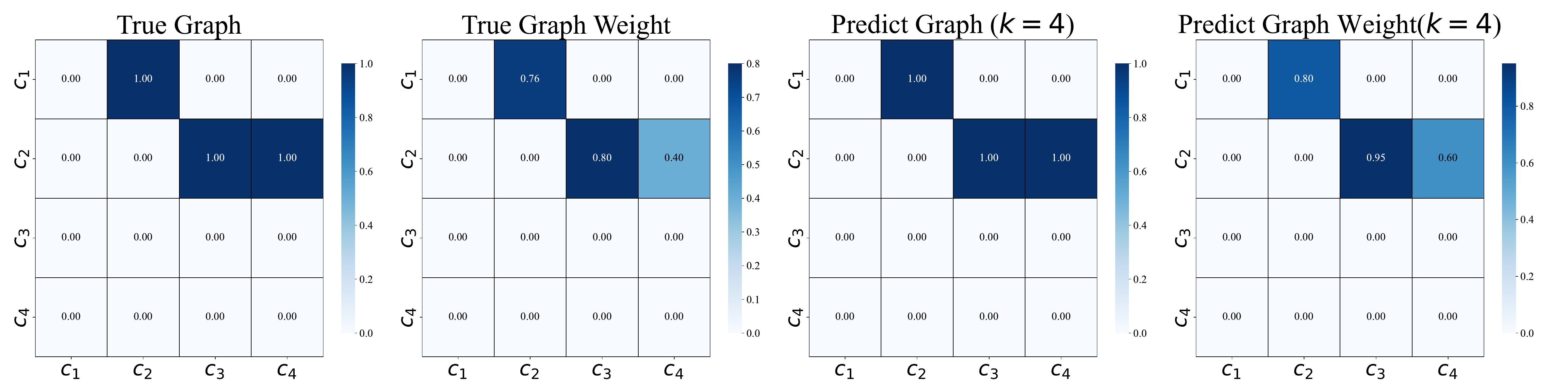}
    \caption{In the graph prediction experiment on the synthetic data, the true graph (left) and the predicted graph (right).}
    \label{fig:6}
\end{figure}
\begin{table}[!t]
\centering
\renewcommand{\arraystretch}{1.2}
\resizebox{\linewidth}{!}{%
% \rowcolors{14}{orange!40}{orange!40}
\begin{tabular}{@{}l|c|c|cccccccccc@{}}
\toprule
Datasets                  & Metric                  & K  & MF      & CDAE    & Multi-DAE     & Multi-VAE     & Macrid-VAE & Rec-VAE       & InvPref & ICDF    & CSA-VAE          & Imp.(\%) \\ \midrule
\multirow{4}{*}{Coat}     & \multirow{2}{*}{Recall \(\uparrow\)} & 10 & 0.03241 & 0.03418 & 0.03530       & 0.03461       & 0.03328    & {\ul 0.03539} & 0.03332 & 0.03439 & \textbf{0.03664} & 3.81\%   \\
                          &                         & 30 & 0.09966 & 0.10180 & 0.09815       & 0.09909       & 0.09810    & {\ul 0.10330} & 0.09806 & 0.09684 & \textbf{0.10647} & 3.07\%   \\
                          & \multirow{2}{*}{NDCG \(\uparrow\)}   & 10 & 0.05126 & 0.05362 & {\ul 0.05834} & 0.05534       & 0.05256    & 0.05665       & 0.05184 & 0.05022 & \textbf{0.05916} & 1.41\%   \\
                          &                         & 30 & 0.07871 & 0.08073 & 0.08212       & 0.08013       & 0.07846    & {\ul 0.08324} & 0.07776 & 0.07475 & \textbf{0.08557} & 2.79\%   \\ \midrule
\multirow{4}{*}{Yahoo!R3} & \multirow{2}{*}{Recall \(\uparrow\)} & 10 & 0.03225 & 0.04493 & 0.06424       & {\ul 0.06560} & 0.02066    & 0.04648       & 0.02548 & 0.02519 & \textbf{0.06768} & 3.17\%   \\
                          &                         & 30 & 0.07687 & 0.10290 & 0.15280       & {\ul 0.15310} & 0.04693    & 0.10200       & 0.06069 & 0.05899 & \textbf{0.16169} & 5.61\%   \\
                          & \multirow{2}{*}{NDCG \(\uparrow\)}   & 10 & 0.01707 & 0.02440 & 0.03069       & {\ul 0.03149} & 0.01121    & 0.02503       & 0.01330 & 0.01372 & \textbf{0.03480} & 10.51\%  \\
                          &                         & 30 & 0.02893 & 0.03996 & 0.05387       & {\ul 0.05447} & 0.01846    & 0.03999       & 0.02275 & 0.02265 & \textbf{0.05928} & 8.83\%   \\ \midrule
\multirow{4}{*}{Reasoner} & \multirow{2}{*}{Recall \(\uparrow\)} & 10 & 0.00338 & 0.00234 & 0.00268       & 0.00356       & 0.00194    & {\ul 0.00386} & 0.00277 & 0.00276 & \textbf{0.00508} & 31.61\%  \\
                          &                         & 30 & 0.00959 & 0.00741 & 0.00961       & {\ul 0.01136} & 0.00723    & 0.00978       & 0.00926 & 0.00972 & \textbf{0.01445} & 27.20\%  \\
                          & \multirow{2}{*}{NDCG \(\uparrow\)}   & 10 & 0.00191 & 0.00129 & 0.00152       & 0.00165       & 0.00111    & {\ul 0.00193} & 0.00157 & 0.00155 & \textbf{0.00274} & 41.97\%  \\
                          &                         & 30 & 0.00366 & 0.00273 & 0.00343       & {\ul 0.00381} & 0.00261    & 0.00355       & 0.00339 & 0.00333 & \textbf{0.00488} & 28.08\%  \\ \midrule 
                          &                                      & 10 & {\ul 0.07115} & 0.06744       & 0.06826       & 0.06296       & 0.06382    & 0.06570       & 0.06939       & 0.06672 & \textbf{0.07442} & 4.59\%   \\
                          &\multirow{-2}{*}{Recall \(\uparrow\)}  & 30 & 0.09671       & 0.12428       & {\ul 0.12864} & 0.12455       & 0.12306    & 0.12640       & 0.12489       & 0.11964 & \textbf{0.13104} & 1.87\%   \\
                          &                                      & 10 & 0.44834       & 0.43268       & 0.45814       & 0.43918       & 0.44796    & 0.44110       & {\ul 0.46040} & 0.43738 & \textbf{0.48594} & 5.55\%   \\
\multirow{-4}{*}{KuaiRec}  &\multirow{-2}{*}{NDCG \(\uparrow\)}    & 30 & 0.27466       & {\ul 0.35317} & 0.36334       & 0.35451       & 0.35196    & 0.35164       & 0.34062       & 0.32359 & \textbf{0.36743} & 4.03\%   \\ \bottomrule
\end{tabular}
}
\caption{The overall performance comparison results of applying our model and baselines on four real-world full-observed datasets. We evaluated the recommendation performance as a ranking task, underlined the best baseline result in each line, and put the best result in each line in bold; Higher Recall and NDCG mean better model performance. The arrow `\(\uparrow\)' (or `\(\downarrow\)') denotes that the higher (or lower) value means better performance on the metric. The `Imp.' row reports the relative improvement or decline of CSA-VAE against the best baseline. The result is calculated based on the mean of five repetitions with different random seeds for all models on each metric.}
\label{tab:3}
\end{table}
\subsection{Performance on the Synthetic Dataset (RQ1).}
CSA-VAE can obtain a useful causal graph of confounders without relevant labels. The synthetic data set we used contained four confounders, resulting in a power of $2^{k(k-1)}$ possible relationships. Although the number of categories is not extensive, it still poses a challenging task. By the causal relationships between confounders present in the synthetic data, we can unambiguously determine the ability of CSA-VAE to capture the causal relationships between confounders. Due to the strong correlation between the local graph and users, we only present the global graph obtained by the CSA-VAE here. As shown in Figure \ref{fig:6}, the global graph learned by CSA-VAE is well aligned with the ground truth graph, thus demonstrating the ability of CSA-VAE to effectively capture the causal relationships between confounders. It is important to emphasize that we used the $Gumbel\text{-}Sigmoid$ shown in Eq. \ref{eq:4}, resulting in an approximation of a binary causal graph by CSA-VAE. The final experimental results strongly support this approach.\par
Additionally, we found that the correlation strength between the confounders obtained by CSA-VAE is slightly higher than the true value. This occurs because the model treats the learning of causal relations as a binary classification task. Although the global causal graph is specialized for determining whether a causal relationship exists between two confounders, the local causal graph also plays a role in this function. To achieve binary classification, the model tends to exaggerate values to ensure accuracy, resulting in a slightly higher correlation strength between confounders with a causal relationship. A larger causal relationship helps the model better measure the impact of confounders on user preferences, thus aiding in more accurate modeling of confounders and user preferences.
\begin{table}[!t]
\centering
\renewcommand{\arraystretch}{1.2}
\resizebox{\linewidth}{!}{%
\begin{tabular}{@{}l|c|ccccccccccc@{}}
\toprule
Datasets                           & Metric                  & K  & \textbf{MF} & \textbf{CDAE} & \textbf{Multi-DAE} & \textbf{Multi-VAE} & \textbf{Macrid-VAE} & \textbf{Rec-VAE} & \textbf{InvPref} & \textbf{ICDF} & \textbf{CSA-VAE} & \textbf{Imp.(\%)} \\ \midrule
\multirow{4}{*}{\textbf{ML-100K}}  & \multirow{2}{*}{Recall \(\uparrow\)} & 10 & 0.02393     & 0.04412       & 0.05554            & {\ul 0.05620}      & 0.02259             & 0.05396          & 0.01344          & 0.01292       & \textbf{0.06207} & 10.44\%           \\
                                   &                         & 30 & 0.06294     & 0.09724       & {\ul 0.15466}      & 0.15226            & 0.05641             & 0.12149          & 0.03476          & 0.03425       & \textbf{0.15948} & 3.12\%            \\
                                   & \multirow{2}{*}{NDCG \(\uparrow\)}   & 10 & 0.03067     & 0.04571       & 0.05168            & 0.05200            & 0.02299             & {\ul 0.05588}    & 0.04518          & 0.03517       & \textbf{0.05876} & 5.15\%            \\
                                   &                         & 30 & 0.04358     & 0.06497       & {\ul 0.08830}      & 0.08715            & 0.03515             & 0.07879          & 0.03246          & 0.04245       & \textbf{0.09307} & 5.40\%            \\ \midrule
\multirow{4}{*}{\textbf{ML-1M}}    & \multirow{2}{*}{Recall \(\uparrow\)} & 10 & 0.02665     & 0.02825       & 0.02903            & 0.02843            & 0.2726              & 0.02739          & 0.02926          & {\ul 0.02997} & \textbf{0.03017} & 0.67\%            \\
                                   &                         & 30 & 0.06525     & 0.07540       & 0.07647            & {\ul 0.08782}      & 0.07729             & 0.07762          & 0.07316          & 0.06961       & \textbf{0.08906} & 1.41\%            \\
                                   & \multirow{2}{*}{NDCG \(\uparrow\)}   & 10 & 0.03373     & 0.03732       & 0.03054            & 0.02954            & 0.02372             & {\ul 0.03733}    & 0.03515          & 0.03425       & \textbf{0.03855} & 3.27\%            \\
                                   &                         & 30 & 0.05079     & 0.05497       & 0.05268            & 0.05235            & 0.05476             & {\ul 0.05593}    & 0.05503          & 0.05289       & \textbf{0.05589} & -0.07\%           \\ \midrule
\multirow{4}{*}{\textbf{ML-10M}}   & \multirow{2}{*}{Recall \(\uparrow\)} & 10 & 0.03101     & {\ul 0.03629} & 0.03552            & 0.03595            & 0.03565             & 0.3539           & 0.03482          & 0.03412       & \textbf{0.04785} & 31.85\%           \\
                                   &                         & 30 & 0.06686     & 0.08193       & {\ul 0.12127}      & 0.11968            & 0.11200             & 0.11933          & 0.07939          & 0.07917       & \textbf{0.13134} & 8.30\%            \\
                                   & \multirow{2}{*}{NDCG \(\uparrow\)}   & 10 & 0.03610     & {\ul 0.04001} & 0.03374            & 0.03410            & 0.03725             & 0.03250          & 0.03834          & 0.03866       & \textbf{0.04506} & 12.62\%           \\
                                   &                         & 30 & 0.04763     & 0.05579       & {\ul 0.06500}      & 0.06456            & 0.05780             & 0.06334          & 0.05346          & 0.05407       & \textbf{0.07544} & 16.06\%           \\ \midrule
\multirow{4}{*}{\textbf{ML-20M}}   & \multirow{2}{*}{Recall \(\uparrow\)} & 10 & 0.03462     & 0.03726       & 0.03802            & {\ul 0.03974}      & 0.03445             & 0.03660          & 0.03738          & 0.03714       & \textbf{0.04064} & 2.26\%            \\
                                   &                         & 30 & 0.07543     & 0.08559       & {\ul 0.12024}      & 0.11965            & 0.11412             & 0.11681          & 0.08554          & 0.08564       & \textbf{0.12260} & 1.96\%            \\
                                   & \multirow{2}{*}{NDCG \(\uparrow\)}   & 10 & 0.03891     & {\ul 0.04300} & 0.03746            & 0.03964            & 0.03631             & 0.03496          & 0.04083          & 0.04215       & \textbf{0.04534} & 5.44\%            \\
                                   &                         & 30 & 0.05232     & 0.05952       & 0.06803            & {\ul 0.06904}      & 0.06567             & 0.06537          & 0.05718          & 0.05677       & \textbf{0.07341} & 6.33\%            \\ \midrule
\multirow{4}{*}{\textbf{ML-25M}}   & \multirow{2}{*}{Recall \(\uparrow\)} & 10 & 0.02972     & 0.03194       & {\ul 0.03612}      & 0.03471            & 0.03293             & 0.03356          & 0.03374          & 0.03399       & \textbf{0.04298} & 18.99\%           \\
                                   &                         & 30 & 0.07140     & 0.07828       & {\ul 0.11434}      & 0.11129            & 0.09572             & 0.11106          & 0.07969          & 0.08177       & \textbf{0.12220} & 6.87\%            \\
                                   & \multirow{2}{*}{NDCG \(\uparrow\)}   & 10 & 0.03697     & {\ul 0.03869} & 0.03652            & 0.03458            & 0.03173             & 0.03332          & 0.03731          & 0.03551       & \textbf{0.04373} & 13.03\%           \\
                                   &                         & 30 & 0.05136     & 0.05493       & {\ul 0.06637}      & 0.06360            & 0.06553             & 0.06259          & 0.05296          & 0.05288       & \textbf{0.07331} & 10.46\%           \\ \midrule
\multirow{4}{*}{\textbf{Epinions}} & \multirow{2}{*}{Recall \(\uparrow\)} & 10 & 0.00897     & 0.01034       & {\ul 0.01765}      & 0.01732            & 0.01191             & 0.01650          & 0.01346          & 0.01311       & \textbf{0.01932} & 9.46\%            \\
                                   &                         & 30 & 0.01828     & 0.02505       & 0.03956            & 0.04067            & 0.01834             & {\ul 0.04195}    & 0.03957          & 0.03835       & \textbf{0.04665} & 11.20\%           \\
                                   & \multirow{2}{*}{NDCG \(\uparrow\)}   & 10 & 0.00630     & 0.00702       & {\ul 0.00822}      & 0.00807            & 0.00648             & 0.00709          & 0.00613          & 0.00599       & \textbf{0.00887} & 7.91\%            \\
                                   &                         & 30 & 0.00740     & 0.00842       & 0.01337            & {\ul 0.01355}      & 0.01128             & 0.01332          & 0.00928          & 0.00967       & \textbf{0.01431} & 5.61\%            \\ \bottomrule
\end{tabular}
}
\caption{The overall performance comparison results of applying our model and baselines on six real-world normal datasets. We evaluated the recommendation performance as a ranking task, underlined the best baseline result in each line, and put the best result in each line in bold; Higher Recall and NDCG mean better model performance. The arrow `\(\uparrow\)' (or `\(\downarrow\)') denotes that the higher (or lower) value means better performance on the metric. The `Imp.' row reports the relative improvement or decline of CSA-VAE against the best baseline. The result is calculated based on the mean of five repetitions with different random seeds for all models on each metric.
}
\label{tab:4}
\end{table}
\subsection{Comparision with Baselines (RQ2)}
The comparison between CSA-VAE and various baselines is shown in Table \ref{tab:3} and Table \ref{tab:4}. The best results (compared across two classes) are shown in bold, and the runner-ups are {\ul underlined}. In summary, we have the following observations:\\
\textbf{(1)} \textbf{\textcolor{black}{Consistent superior recommendation performance.}} The result demonstrates that the CSA-VAE model consistently outperforms the baselines regarding Recall and NDCG across various datasets and evaluation metrics. Specifically, CSA-VAE achieves the highest Recall and NDCG scores in nearly all cases, indicating its superior ability to recommend relevant items to users. Remarkably, CSA-VAE substantially improves Recall and NDCG compared to the baselines. The average performance boost against several state-of-the-art baselines achieves up to \textcolor{black}{9.55\%} across different datasets and evaluation settings, demonstrating the superiority of our model. \\
\textbf{(2)} \textbf{CSA-VAE can better model user preference.} As Table \ref{tab:3} shows, \textcolor{black}{on unbiased datasets \dataset{Coat}, \dataset{Yahoo R3}, \dataset{Reasoner} and \dataset{Kuairec}, CSA-VAE user only user-specific preference for the recommendation,} CSA-VAE achieves the highest Recall and NDCG scores in all cases, indicating its superior ability on model user preference rather than mixed preference in feedback data. Remarkably, CSA-VAE substantially improves Recall and NDCG compared to the deconfounded methods ICDF and Invpref, the average performance boost against several state-of-the-art baselines achieves up to \textcolor{black}{11.50\%}.\\
\textbf{(3)} \textbf{CSA-VAE can better fit the mixed preference in feedback data.} As Table \ref{tab:4} shows, on normal datasets \dataset{ML}-\dataset{100K}, \dataset{ML}-\dataset{1M}, \dataset{ML}-\dataset{10M}, \dataset{ML}-\dataset{20M} and \dataset{ML}-\dataset{25M}, and \dataset{Epinions}, CSA-VAE achieves the highest Recall and NDCG scores in all cases, indicating its superior ability on model user preference rather than mixed preference in feedback data. The slightly weaker performance on ML-1M compared to Rec-VAE is due to the fact that Rec-VAE uses additional training tricks, such as alternating updates of prior, which we did not use. The datasets with higher sparsity levels, such as ML-20M, Ml-25M, and Epinions, often present more challenges for traditional recommender systems due to the scarcity of user-item interactions. However, the CSA-VAE model demonstrates substantial improvements in these datasets, suggesting its robustness in handling sparse data and providing meaningful recommendations despite the challenges posed by data sparsity. \\
In summary, the comprehensive analysis of the performance of CSA-VAE across these datasets underscores its potential to significantly enhance recommendation quality and user engagement across a spectrum of real-world applications. The consistent outperformance over baseline models highlights the efficacy of integrating causal graph-based approaches in recommender systems, addressing unobserved confounders and providing more accurate and satisfying recommendations.
% $CSA-VAE^{non}$: removing both global and local graphs, $CSA-VAE^{g}$: removing local graph, $CSA-VAE^{l}$: removing global graph, $CSA-VAE^{g+l}$: deploying both global and local graphs simultaneously. \textbf{(+xx\%)}: percentage of improvement on NDCG of $CSA-VAE^{g}$, $CSA-VAE^{l}$, $CSA-VAE^{g+l}$ over $CSA-VAE^{non}$
\begin{figure}[!t]
    \centering
    \includegraphics[width=0.9\linewidth]{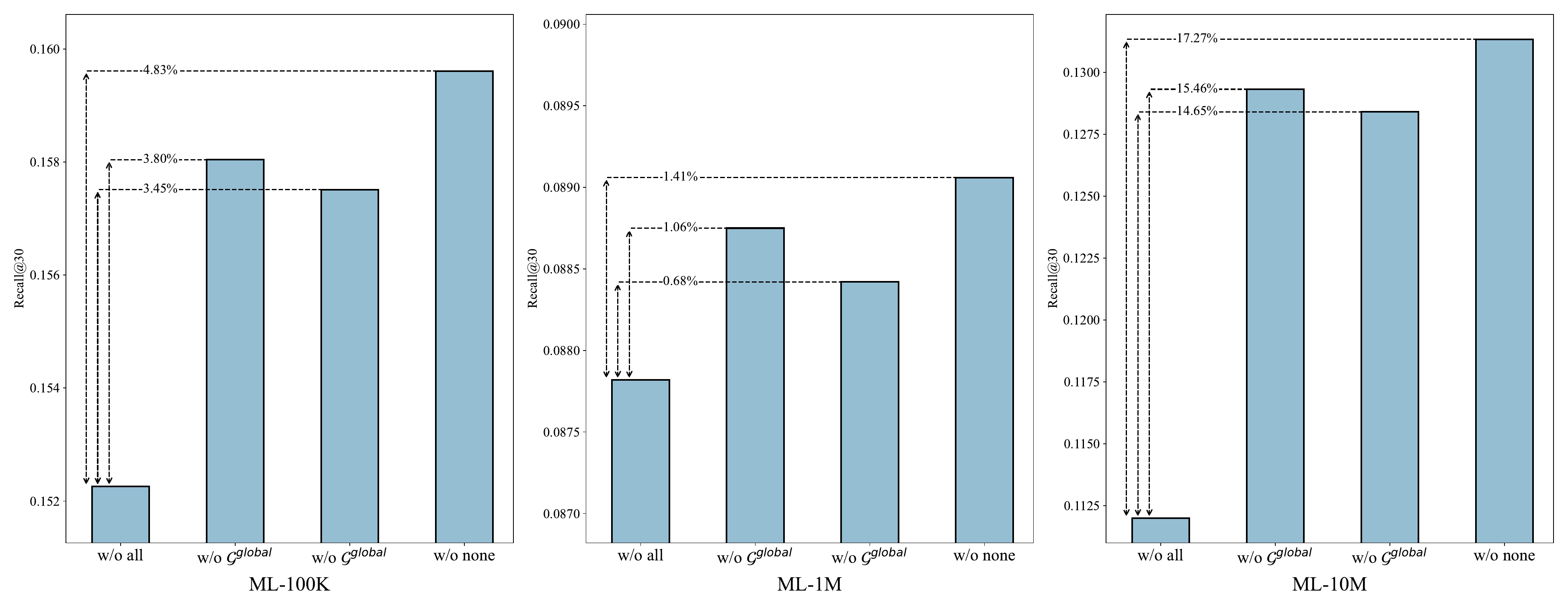}
    \caption{Ablation experiments on global and local graphs conducted on Ml-100k, ML-1M and ML-10M with Recall@30.}
    \label{fig:8}
\end{figure}
\subsection{Ablation Study (RQ3).}
\subsubsection{\textbf{Effectiveness of Global and Local Graphs.}} The Figure \ref{fig:8} presents results for different variants involving both the global and local graphs:$w/o\ \mathcal{G}^{none}$ without both global and local graphs, equivalent to Multi-VAE; $w/o\ \mathcal{G}^{local}$ (without local graph); $w/o\ \mathcal{G}^{global}$ (without global graph); and $w/o\ \mathcal{G}^{none}$ (with both global and local graphs). we have the following observations: \\
\textbf{(1)} Using either the global graph or the local graph alone results in improved performance on the dataset. The global graph captures the macro-level causal relationships among confounders, albeit losing specificity to users. On the other hand, the local graph captures user specificity but loses the accurate relationships between confounders. Both contribute partially to the causal relationships between confounders, leading to an enhancement in model performance. \\
\textbf{(2)} Simultaneously using the local and global graphs results in a performance improvement greater than the sum of their individual contributions. The global graph can correct the erroneous causal relationships between confounders in the local graph, while the local graph assigns user-specific weights to the global graph. The synergy of both significantly enhances the model's performance. In summary, the collaborative effect of both graphs significantly enhances recommendation quality by enabling a more comprehensive understanding of user behavior and preferences, creating a powerful model for accurate and effective recommendations.

\begin{figure}[tp]
   \centering
    \includegraphics[width=0.9\linewidth]{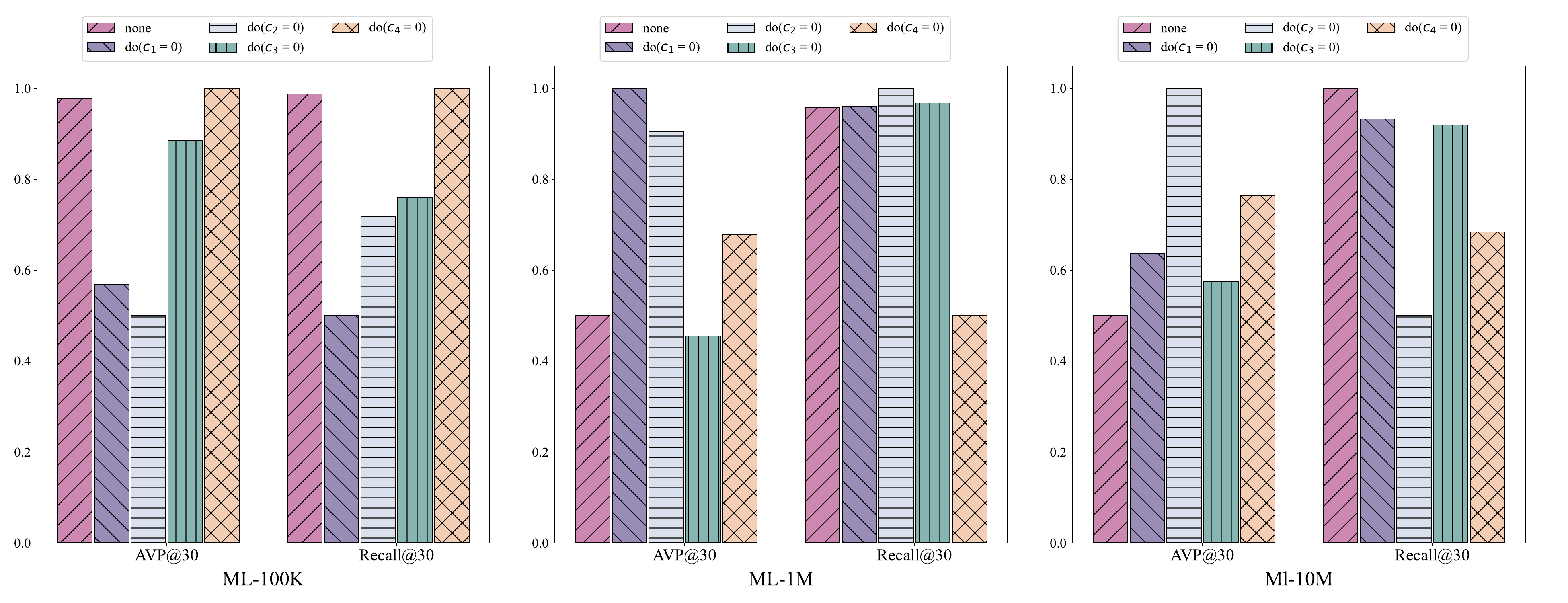}
    \caption{"do-operation" experiments conducted on Ml-100k, ML-1M and ML-10M with Recall@30 and AVP@30, the number of confounders, $k = 4$. \(do(c_i = 0)\) means performing the "do-operation" on $c_1$, $c_2$, $c_3$, and $c_4$, and \(do(none)\) means no "do-operation" on any confounders.}
    \label{fig:10}
\end{figure}
\subsubsection{\textbf{"do-operation" with Mask Graph.}} We conducted common "do-operation" experiments in the causal inference domain on the \dataset{ML}-\dataset{100K}, \dataset{ML}-\dataset{1M}, and \dataset{ML}-\dataset{10M} datasets, and the results are shown in Figure \ref{fig:10}. We employed four concepts of confounders ($k=4$) to model the unobserved confounders in these datasets. We used the popularity of items recommended (AVP@30) to the users and Recall@30 to evaluate the effect of the confounders. It is important to note that applying the "do-operation" to a confounder $c_i$ involves using an additional masking matrix and performing a dot product with the global graph. In the masking matrix, the $i$-th row is all zeros (indicating no influence as a parent node), and the rest of the rows are all ones. With this operation, we can answer the question: "If confounder \(c_k\) has no effect, how does this impact user performance?" From the Figure \ref{fig:10}, we have the following observation:\\
\textbf{(1)} \textbf{Confounder do harmful to user preference modeling.} When we operate on confounders, the performance of the model is mostly affected, which means that the confounders play an important role in the final recommendation process, where user-specific preferences should be absolutely dominant. The existence of confounders leads to the inevitable influence of confounders in the process of modeling user-specific preferences. Therefore, decoupling confounders and user-specific preferences is a necessary approach to better model user preferences.\\
\textbf{(2)} \textbf{Confounder not totally harmful to user preference modeling.} When we operate on confounders on the \dataset{ML}-\dataset{1M} and \dataset{ML}-\dataset{10M} datasets, as Figure \ref{fig:10} shows, the popularity of recommend items increases, which means these confounders have a positive effect on mitigating popularity bias. Based on this finding, we can achieve the short-term goals of the recommendation system by controlling variables. For instance, we can recommend popular items to users at certain times while refraining from using additional recommendation strategies at other times.\\

\begin{figure}
    \centering
    \includegraphics[width=0.9\linewidth]{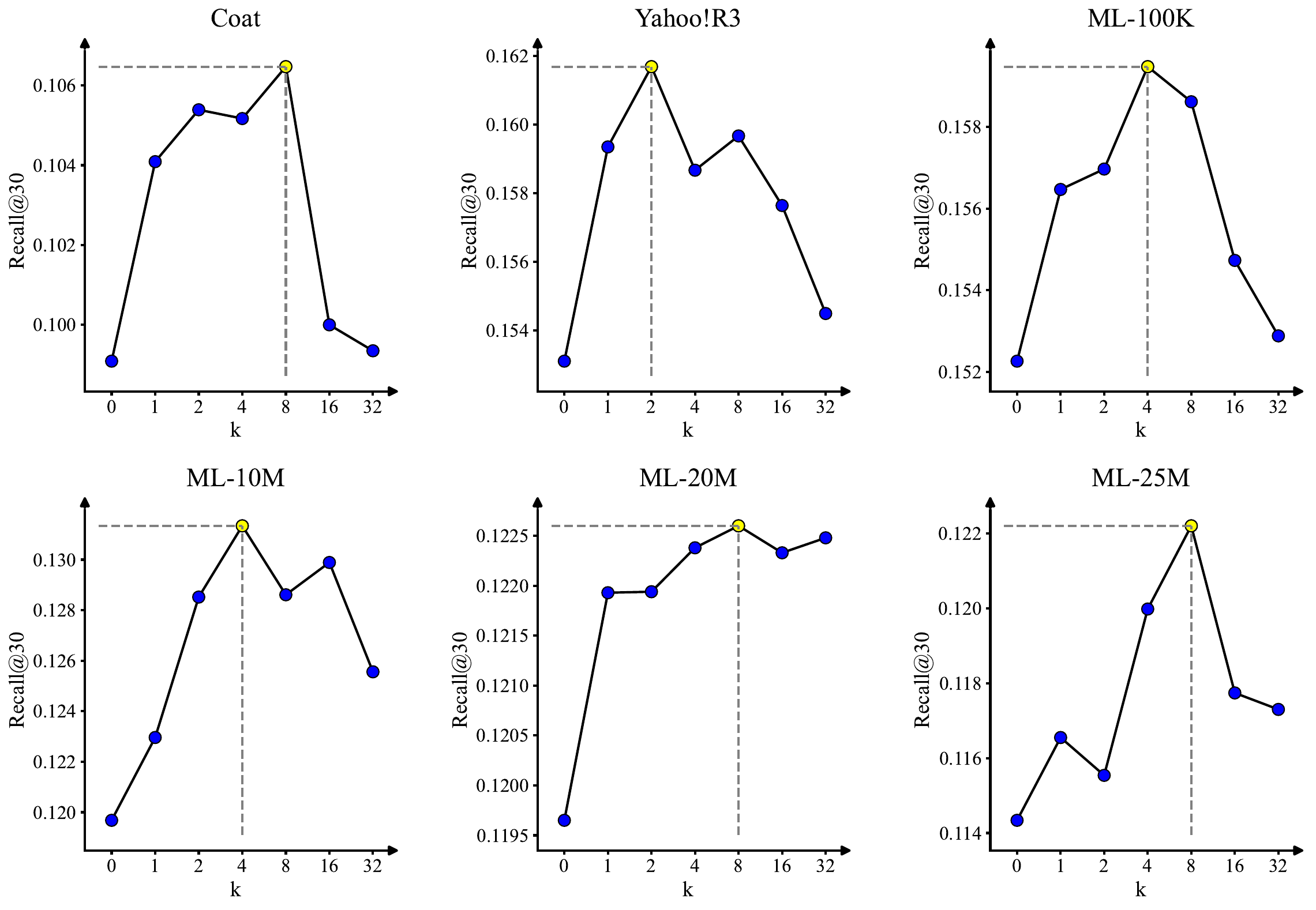}
    \caption{Sensitivity of CSA-VAE with different confounders number \(k\). The horizontal axes of all sub-figures are the variable \(k\).}
    \label{fig:7}
\end{figure}
\begin{table}[!t]
\centering
\renewcommand{\arraystretch}{1.5}
\resizebox{0.8\linewidth}{!}{
\begin{tabular}{@{}lcccccc@{}}
\toprule
\textbf{Model}      & \textbf{CDAE} & \textbf{Multi-DAE} & \textbf{Multi-VAE} & \textbf{Macrid-VAE} & \textbf{Rec-VAE} & \textbf{CSA-VAE}   \\ \midrule
\textbf{Complexity} & \textit{O(n(m+d))}     & \textit{O(n(m+d))}          & \textit{O(n(m+d))}          & \textit{O(k\(\cdot\)n(m+d))}         & \textit{O(n(m+d))}        & \textit{O(n(m+\(k^{2}\)+d))} \\
\textbf{FLOPs (M)} & 23.11         & 29.32              & 29.33              & 51.03               & 44.02            & 29.64              \\ \bottomrule
\end{tabular}
}
\caption{Computational Complexity Comparison on ML-25M.}
\label{tab:compleana}
\end{table}
\begin{figure}[!t]
    \centering
    \begin{minipage}[h]{\linewidth}
        \centerline{\includegraphics[width=0.6\linewidth]{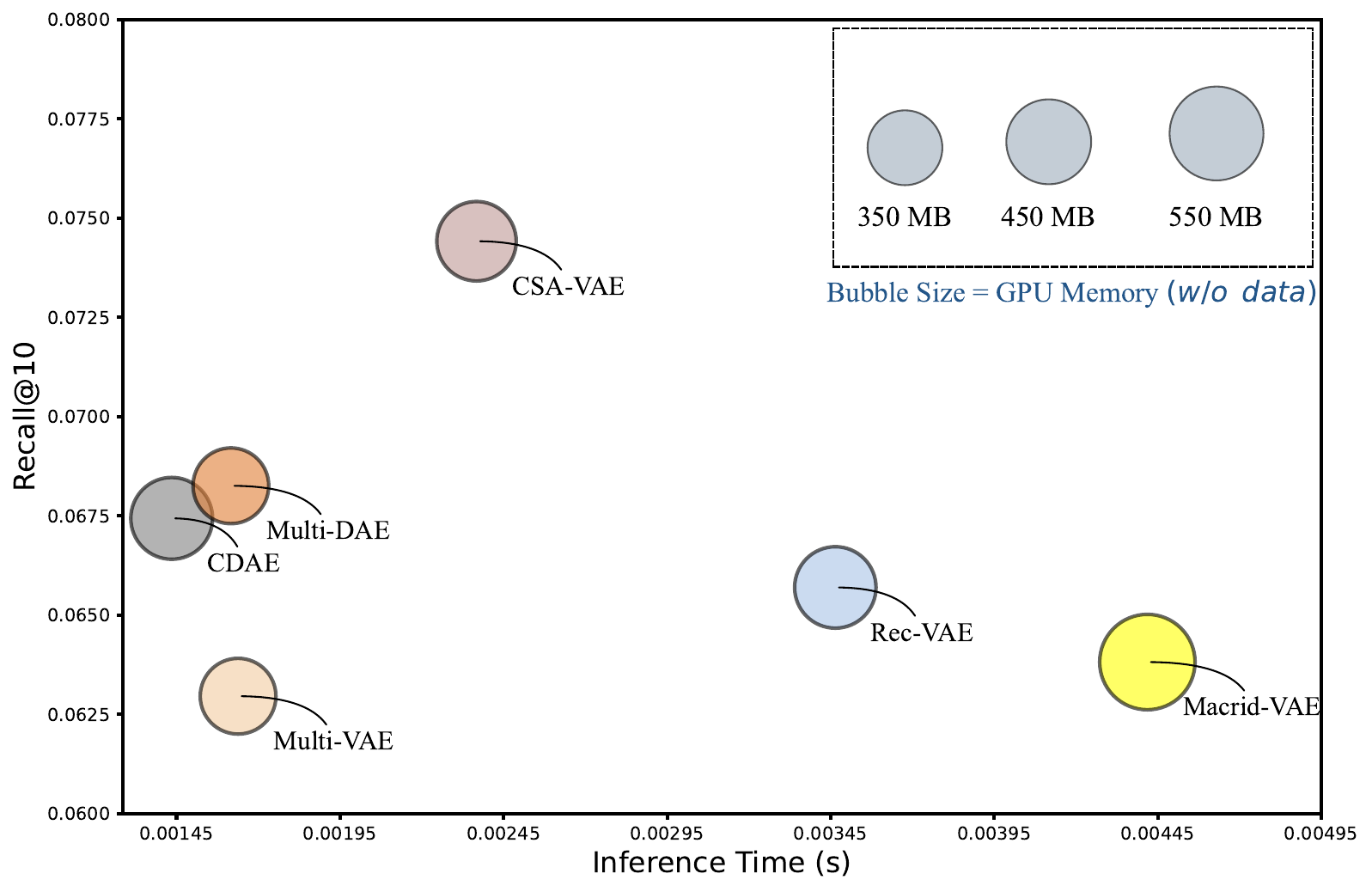}}
        \centerline{(a) Recall / Efficiency traded-off on ML-25M.}
    \end{minipage}
    \vfill
    \vspace{0.8cm}
    \begin{minipage}[!h]{0.48\linewidth}
        \centerline{\includegraphics[width=0.8\linewidth]{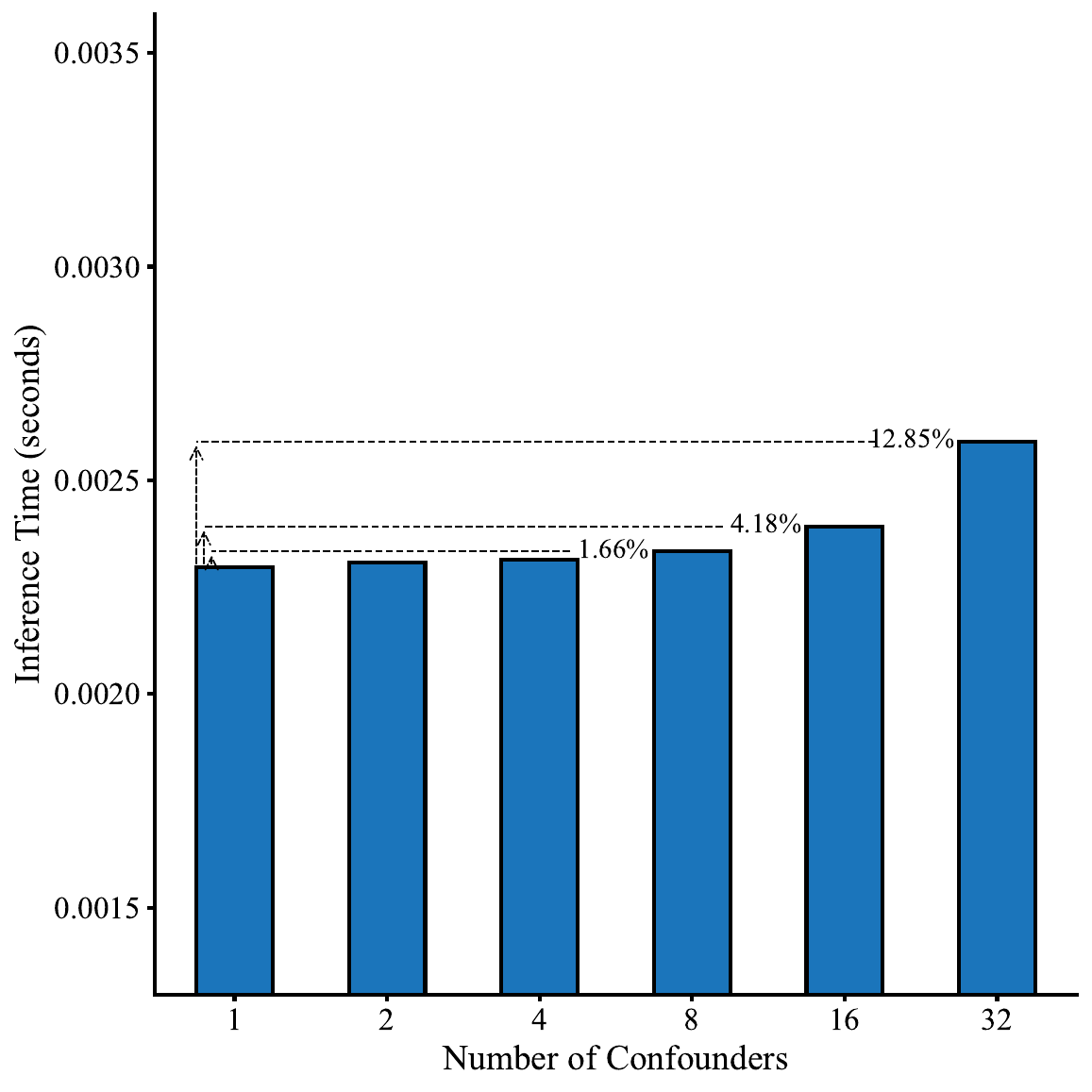}}
        \centerline{(b) CSA-VAE Inference Time}
    \end{minipage}
    \hfill
    \begin{minipage}[!h]{0.48\linewidth}
        \centerline{\includegraphics[width=0.8\linewidth]{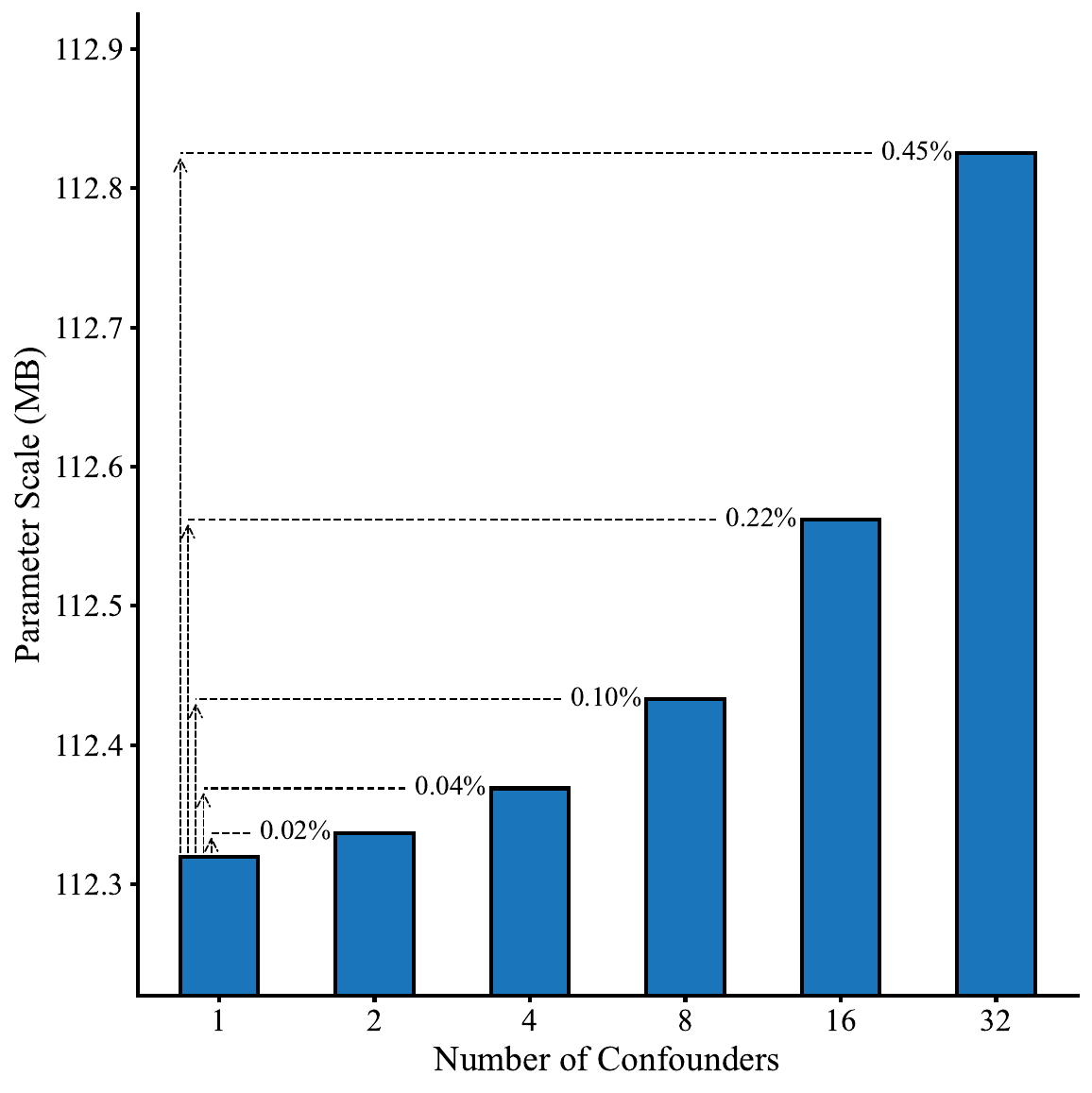}}
        \centerline{(c) CSA-VAE Parameter Scale}
    \end{minipage}
    \caption{Computational Complexity Comparison on ML-25M. (a) is the Recall / Efficiency traded-off comparison. A higher position on the vertical axis indicates better performance, while moving left along the horizontal axis signifies lower inference time costs. A smaller bubble size indicates reduced GPU memory costs during inference. (b) and (c) compare memory and inference time under various value numbers of confounders.}
    \label{fig:complexana}
\end{figure}

\subsection{Computational Complexity Analysis on ML-25M}
\textcolor{black}{
We computed the model's Floating Point Operations (FLOPs) on the biggest dataset \dataset{ML}-\dataset{25M} using MMEengine \cite{mmengine2022} to indicate the model's learning complexity, which is a critical component in optimizing neural networks for performance and efficiency. The higher number of FLOPs signifies a more challenging learning process. The results are presented in Table \ref{tab:compleana} and Figure \ref{fig:complexana} (a). Additionally, we conducted experiments with varying numbers of confounders to validate their effect on the model's inference cost on \dataset{ML}-\dataset{25M}. The corresponding results are shown in Figures \ref{fig:complexana} (b) and (c). Based on these experiments, we have the following observations:
\begin{itemize}
    \item Compared to classical models, our model has a relatively lower increase in learning difficulty and outperforms other mainstream VAE-based models. Results shown in Table \ref{tab:compleana}, introducing the causal graph as an additional learning task increases the learning difficulty of CSA-VAE. However, compared to Rec-VAE and Macrid-VAE, the increase in FLOPs is relatively lower.
    \item Our model balances model size and inference speed to a certain extent. As shown in Figure \ref{fig:complexana} (a), compared to mainstream VAE-based models (e.g., Rec-VAE), our model achieves higher inference speed and has significantly smaller model parameters. Notably, compared to traditional models (e.g., CDAE), our model also shows a clear advantage in terms of parameter size.
    \item Under the condition of maintaining recommendation performance, the increase in model inference time caused by adding confounders remains within an acceptable range. As shown in Table \ref{tab:compleana}, the number of confounders is a key determinant of the higher time complexity of CSA-VAE compared to other models, such as Multi-VAE. However, when the dataset is large, \(m\) becomes much more extensive than \(k\). At this point, the model's inference time is primarily influenced by the dataset size, with the number of confounders having a negligible impact. This observation is confirmed in Figure \ref{fig:complexana} (b). When \(k = 8\), the inference time is only 1.66\% higher than when \(k = 1\), which falls within an acceptable range. It is also important to note, as shown in Figure 9, that tremendous values of \(k\) do not yield better results. Therefore, a value of \(k \leq 8\) is generally considered optimal.
    \item In our model, the number of confounders minimally impacts the model's parameter size. As shown in Figure \ref{fig:complexana} (c), when \( k = 32 \), the model size increases by only 0.45\% compared to when \( k = 1 \), this increase in model size is almost imperceptible relative to the overall size of the model.
\end{itemize}
}
\begin{figure}[!t]
    \centering
    \includegraphics[width=0.8\linewidth]{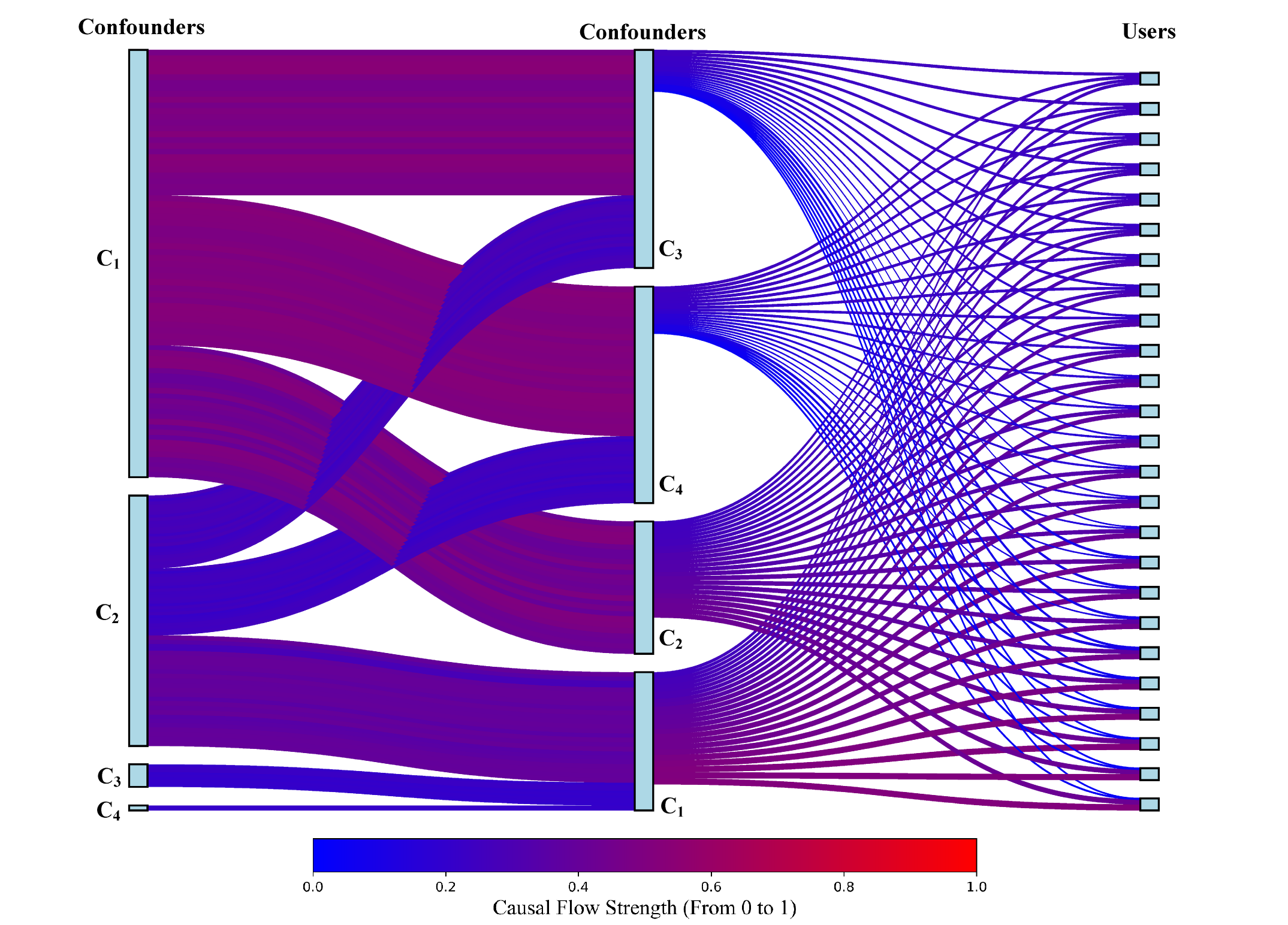}
    \caption{Causal flow visualization based on Sankey (left to right). The left and middle columns represent the causal flow between confounders, and the right columns represent the causal flow between confounders and users (30 randomly selected). The color intensity indicates the strength of the causal relationship, and closer to red indicates more muscular strength (Since there is no intensity clipping, form weak loops (lower strength value) in the visualization).}
    \label{fig:causalflow}
\end{figure}
\subsection{Causal Flow Visualization}
\textcolor{black}{
We use Sankey to represent the flow and strength of causal relationships between confounders and between confounders and users. The results are shown in Figure \ref{fig:causalflow}, we have the following observations:}\par
\textcolor{black}{
\begin{itemize}
    \item The causal strengths between the confounders in the two columns on the left are not identical (the strength between the same two nodes varies from user to user), which should be the same if only using the global causal graph (without user information guidance), confirming that our model can generate user-specific causal graphs and capture complex nonlinear relationships between confounders under the guidance of user preference information.
    \item The causal strength exhibits clear user differences when it flows to users through confounders. The same confounder has varying levels of influence on different users; results demonstrate that our model captures the complex nonlinear relationships between confounders and user preferences.
    \item Users are affected to varying levels by different confounders, and the same user is affected to different capacities by multiple confounders, indicating that our model captures the varying sensitivity of users to different confounders.
\end{itemize}
}
\subsection{Sensitivity Analysis (RQ4).}
We used various values of $k$ on the Coat, Yahoo!R3, ML-100k, ML-10M, ML-20M, and ML-25M datasets to verify the influence of the number of confounders. As Figure \ref{fig:7} (right side) shows, we have the following observations: \\
\textbf{(1)} \textbf{The performance of different numbers of confounders is closely related to the magnitude of the datasets.} As Figure \ref{fig:7} (right side) shows, With the increase in the number of confounders, the model performance decreases more on small data sets than on large data sets. This result arises because the dataset size is insufficient to fully support the model in identifying edges in the directed causal graph corresponding to the number of confounders. Specifically, for a directed causal graph with $k$ nodes, there are \(2^{k(k-1)}\) possible edges, requiring at least \(2^{k(k-1)}\) interactions to identify these edges reliably, and we need more due to the user-specific in the recommendation. Consequently, as the number of nodes increases significantly, inadequate data prevents the model from producing an improved graph, leading to a decline in performance.\\
\textbf{(2)}  \textbf{The greater the number of confounders the better.}  A larger $k$ implies a more diverse set of confounders. As $k$ increases, we obtain a finer-grained representation of the confounders, improving model performance. As Figure \ref{fig:7} shows, the model's performance increases with the increase of \(k\) until reaching the critical point mentioned above. However, an excessively large $k$ may cause the learned representation of confounders to be more than the actual number of concepts influencing the data, leading to model overfitting and a subsequent decrease in performance.
\begin{figure}[!t]
    \centering
    \begin{minipage}[!h]{0.48\linewidth}
        \centerline{\includegraphics[width=0.75\linewidth]{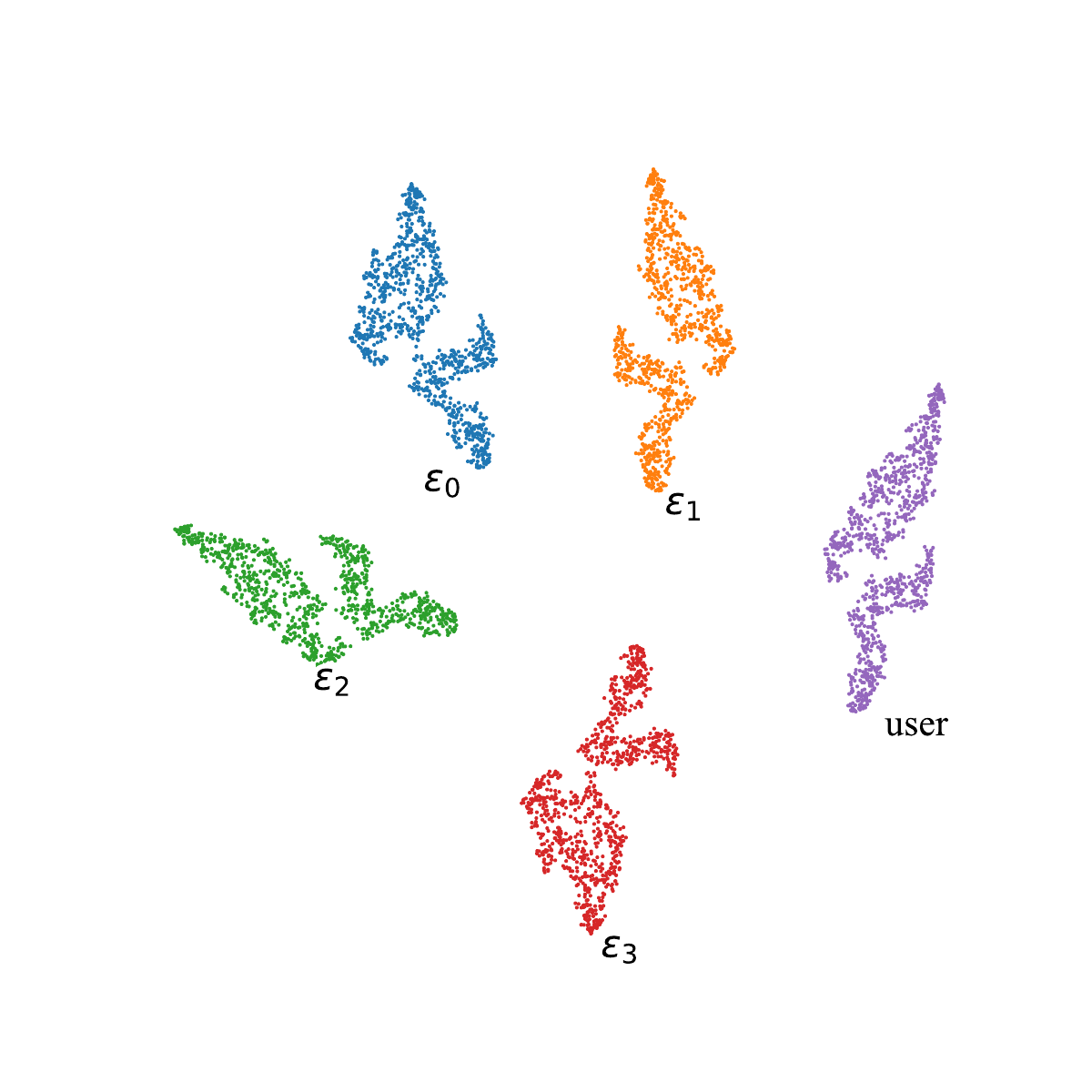}}
        \centerline{(a) Before \textsc{Causal Layer}}
        \centerline{\quad \quad with the diversity constraint}
    \end{minipage}
    \hfill
    \begin{minipage}[!h]{0.48\linewidth}
        \centerline{\includegraphics[width=0.75\linewidth]{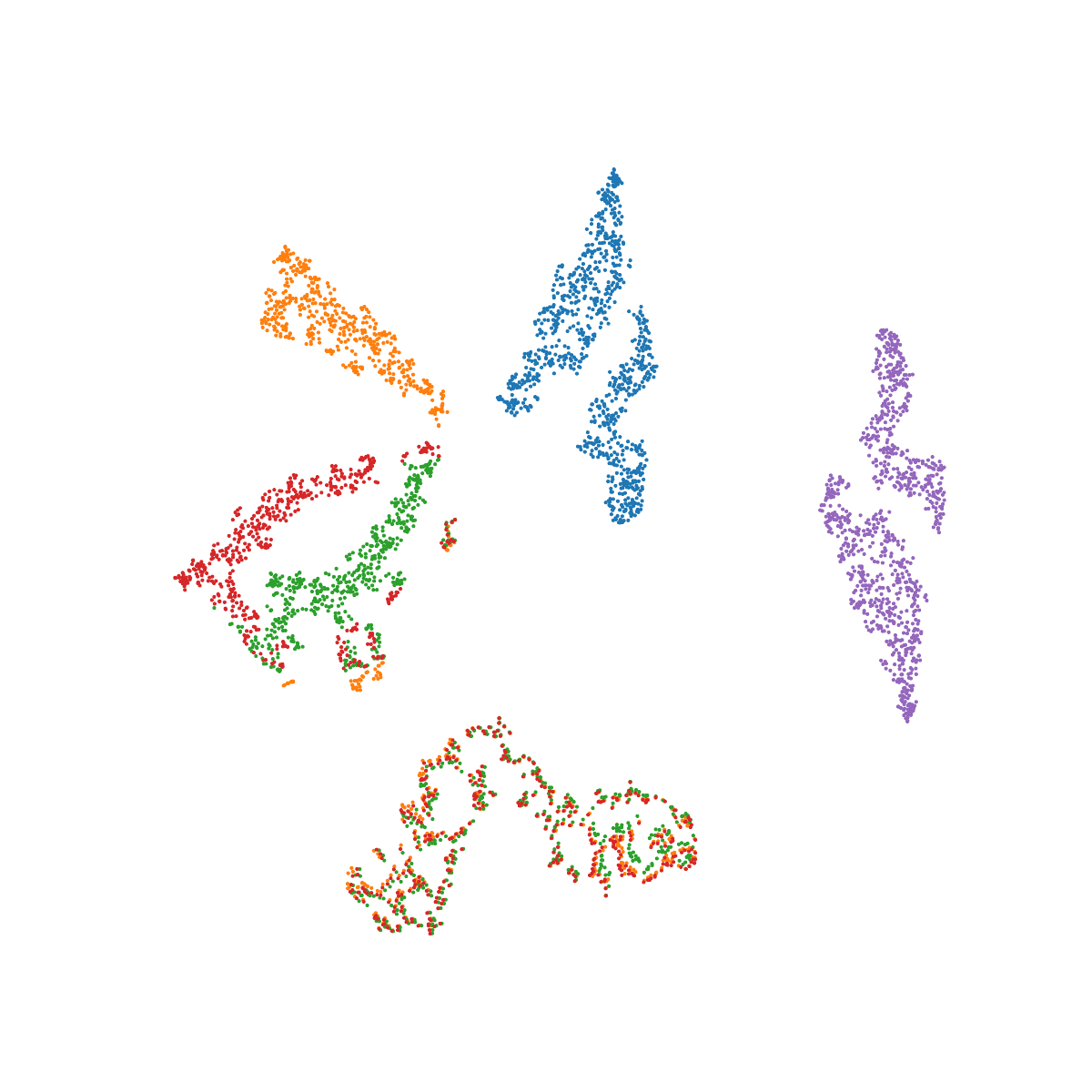}}
        \centerline{(b) After \textsc{Mask Layer}}
        \centerline{\quad \quad with the diversity constraint}
    \end{minipage}
    \vfill
    \vfill
    \begin{minipage}[!h]{0.48\linewidth}
        \centerline{\includegraphics[width=0.75\linewidth]{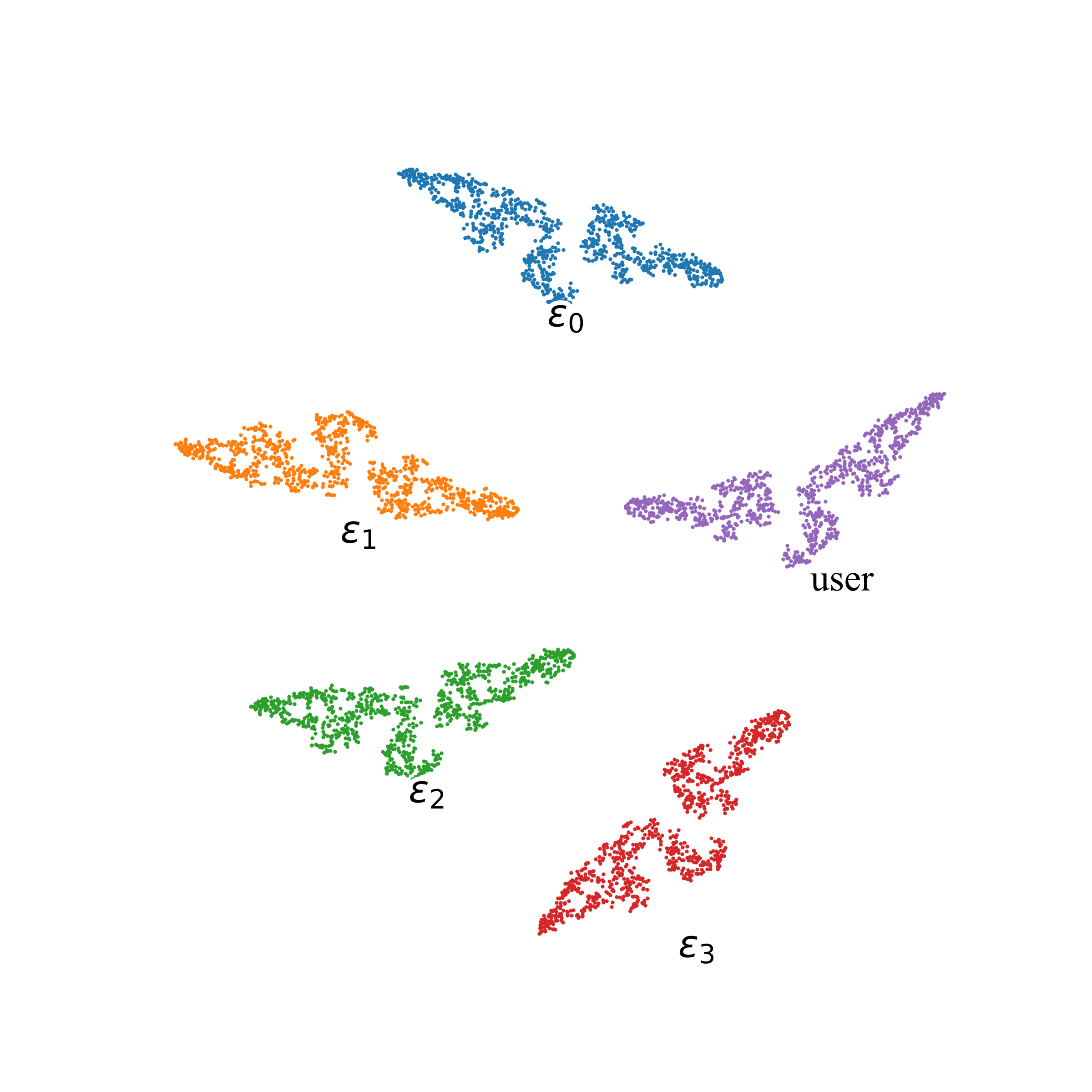}}
        \centerline{(c) Before \textsc{Causal Layer}}
        \centerline{\quad \quad without the diversity constraint}
    \end{minipage}
    \hfill
    \begin{minipage}[!h]{0.48\linewidth}
        \centerline{\includegraphics[width=0.75\linewidth]{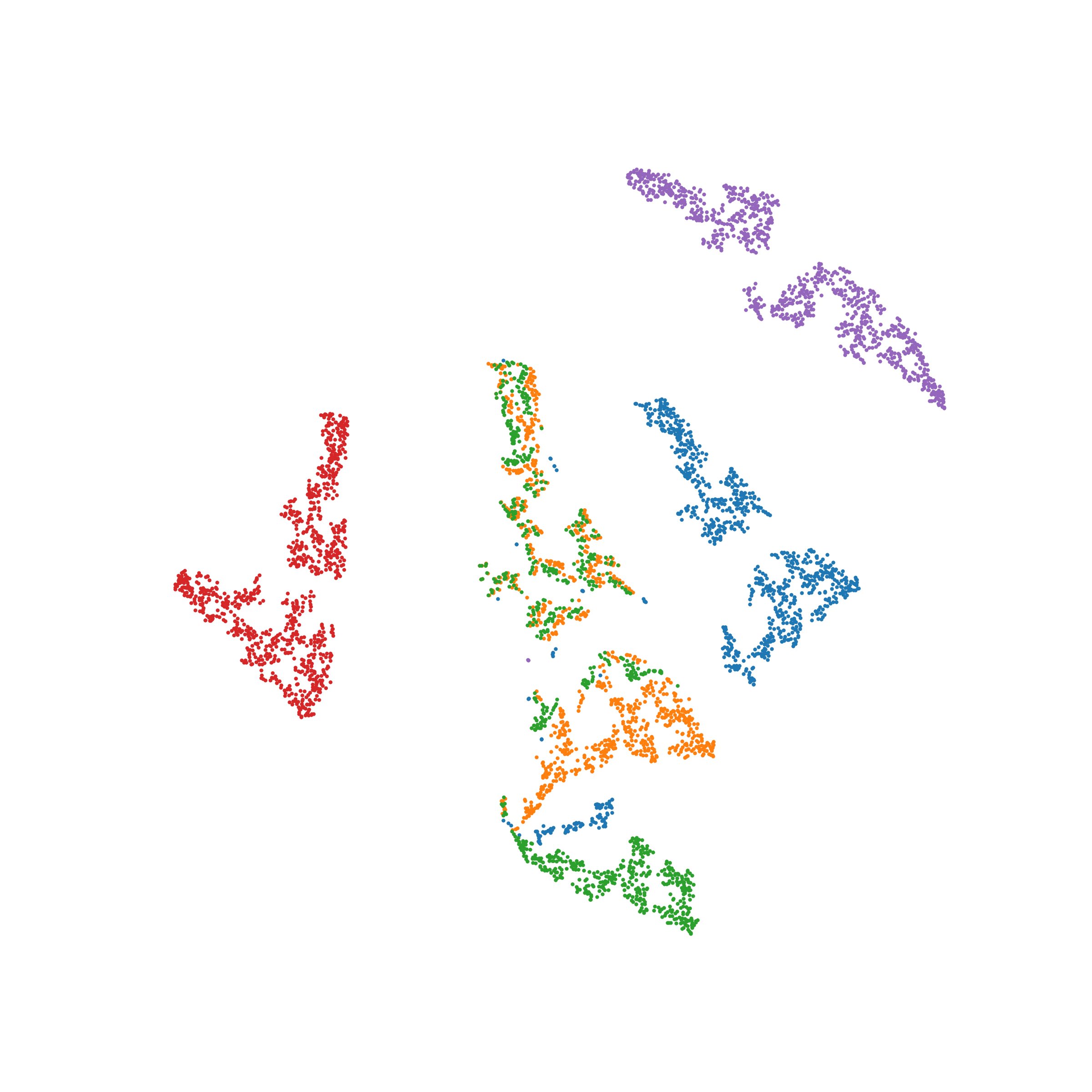}}
        \centerline{(d) After \textsc{Mask Layer}}
        \centerline{\quad \quad without the diversity constraint}
    \end{minipage}
    \caption{Visualization of confounders and user preference on ML-100K, $k = 4$. \textcolor{black}{The same colors in (a)-(b) and (c)-(d) represent the same confounders. (a)-(c)is the visualization of exogenous variables \(\epsilon_k\) and user preference before \(\mathsf{Causal\ Layer}\) and (b)-(d) is the visualization of confounders and user preference after \(\mathsf{Mask\ Layer}\)}. The numbers \(\{0,1,2,3\}\) correspond to the elements in the confounders set \(\{c_1,c_2,c_3,c_4\}\), and \(\epsilon_i\ (i \in [0,k])\) identifies the exogenous variables of the confounders. \textcolor{black}{Among them, (a)–(b) are the results of training under diversity constraints, while (c)–(d) are the results of training without diversity constraints.}}
    \label{fig:9}
\end{figure}

\subsection{Visualization of Confounders and User-specific preference}
We use t-SNE \cite{tsne} to visualize the confounders and user preference on ML-100K, with $k = 4$. Specifically, we first visual the exogenous variables of confounders and user preference before \(\mathsf{Causal\ Layer}\) and then visual the representation of confounders and user preference after \(\mathsf{Mask\ Layer}\). We have the following observations:
\textcolor{black}{
\begin{itemize}
   \item (1) As shown in Figure \ref{fig:9}, the exogenous variables \(\epsilon_k\) of \(k\) confounders and user preferences exhibit five distinct clusters in both (a) and (c), confirming the independence between confounders and user preferences, thus supporting our Assumption 1.
    \item (2) After encoding global and local graphs, the causal relationship involving confounders manifests as adjacent clusters in the graph in both (b) and (d). Results demonstrate that our method CSC-VAE effectively captures causal relationships between confounders, thereby enhancing user-specific and mixed preferences modeling in feedback data.
    \item (3) Regardless of the presence of diversity constraints, the exogenous variables of the confounders in (a) and (c) form independent clusters, demonstrating that the model can capture the independence of the exogenous variables of the confounders, which is not a result of the diversity constraints. Additionally, from (b) and (d), it can be observed that the confounders after the mask layer integrated better when diversity constraints are applied, indicating that the presence of diversity constraints helps the model obtain a better representation of the confounders and uncover the causal relationships between them.
\end{itemize}
}
\begin{figure}[!t]
    \centering
    \includegraphics[width=0.9\linewidth]{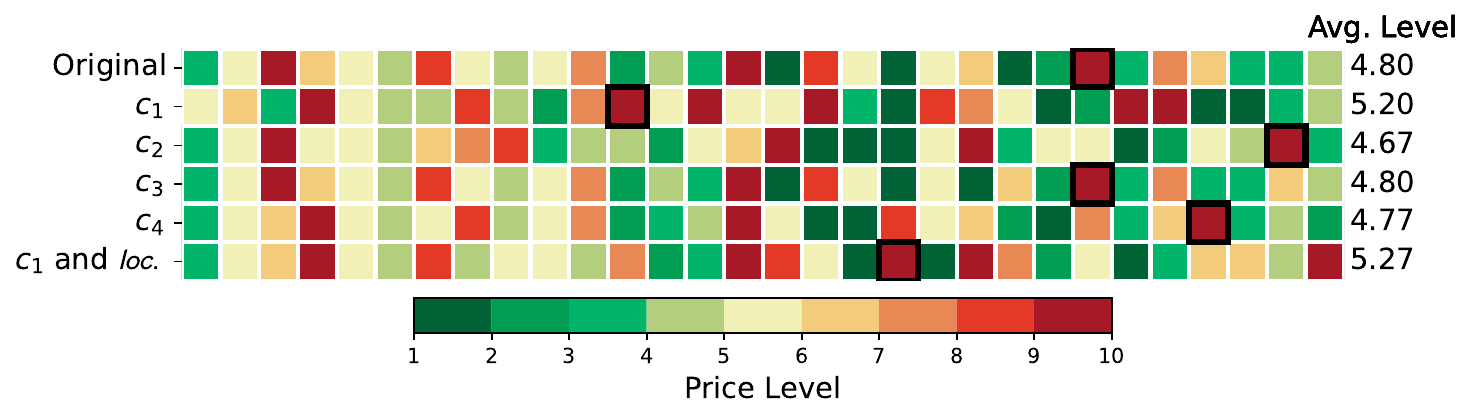}
    \caption{The case study indicates the effectiveness of controlling confounders in recommender systems. We divided the prices in the \dataset{Epinions} dataset into ten equally sized intervals, with higher levels corresponding to higher prices. The selected user prefers high-priced items in his feedback data, and the black box represents the ground-truth item.}
    \label{fig:casestu}
\end{figure}
\subsection{Case Study on Epinions}
\textcolor{black}{Using the \dataset{Epinions} dataset, which includes the price feature, we selected a user with a significant number of high-priced items in his historical interactions as the subject of the case study. By controlling for confounders, we generated alternative lists of recommended items for this user. We have the following observations:
\begin{itemize}
    \item We can obtain a more accurate user recommendation list by controlling the confounders. As shown in Figure \ref{fig:casestu}, controlling the confounder \(c_1\) improves the average price level of the recommendation list and the ranking of the ground-truth item.
    \item Only the global causal graph is relied upon without using the local causal graph, achieving a suboptimal recommendation list. As shown in Figure \ref{fig:casestu}, masking the local causal graph still improves the overall price level of the recommendation list, but the ranking of the ground-truth item remains suboptimal.
    \item When users are dissatisfied with the system's current recommendations, our model allows them to modify their recommendation list by controlling the confounders and causal graphs learned by the model. For example, when a user believes that the recommendation system's modeling expectations do not align with their preferences (e.g., a user sensitive to price), they can adjust the confounders to increase or decrease the average price level of the recommendations. Results demonstrate that our model, compared to traditional models, offers users greater flexibility, potentially improving overall user satisfaction.
\end{itemize}}

\section{Conclusion}
Predicting user preferences in the presence of confounders is a challenging problem. We first redefined the problem, incorporating the influence of confounders into the model. We proposed a mild assumption to separate user preferences from confounders and used a combination of local and global graphs to capture the causal relationships between confounders and user-specific preferences. Finally, we proposed a VAE-based model called CSA-VAE. Extensive experiments are conducted on a synthetic dataset and nine real-world datasets to demonstrate the model's superiority. We theoretically proved the model's Evidence Lower Bound (ELBO) and the learned graph's identifiability. Furthermore, we employed the "do-operation" method to validate the controllability of the model, potentially offering users fine-grained control over the objectives of their recommendation lists with the learned causal graphs. Future work can explore advanced unsupervised clustering methods to obtain categories of confounders further, addressing the limitation of uncertainty in the impact categories of obtained confounders on user preferences.
\section{ACKNOWLEDGMENTS}
 This work is supported by the Natural Science Foundation of China No. 62472196, Jilin Science and Technology Research Project 20230101067JC, and the National Natural Science Foundation of China under Grant NO. 62176014, the Fundamental Research Funds for the Central Universities.
\bibliographystyle{ACM-Reference-Format}
\bibliography{sample-base}

%%
%% If your work has an appendix, this is the place to put it.
\appendix

\section{Proofs}
\subsection{Proof of Evidence Lower Bound}
\label{elbo}
\begin{equation}
   \begin{aligned}
        \ln p_{\theta}(x_{u}) \geq E_{q(z,|x_u,\mathit{sInfo_u},\textbf{C})}\left[\ln p(x_{u} | z, \mathit{sInfo_u}, \textbf{C})]\right. - D_{KL}(q(z|x_u,\mathit{sInfo_u}, \textbf{C}) \| p(z|\mathit{sInfo_u},\textbf{C})). \nonumber\\
    \end{aligned}
\end{equation}
We give the proofs as follows:
\begin{proof}
The mixed preference \( z \) is obtained from the conditional distribution \( p(z|\mathit{sInfo_u}, \textbf{C}) \), and \( \mathit{sInfo_u} \) and \( \textbf{C} \) are independent of each other, we need to adjust the derivation of the VAE's ELBO to account for this new conditional distribution.

We can write the conditional distribution \( p(z|\mathit{sInfo_u}, \textbf{C}) \) and incorporate these conditions into the ELBO derivation.

The marginal log-likelihood of the data can be expressed as:

\[
\log p(x_u|\mathit{sInfo_u}, \textbf{C}) = \log \int p(x_u, z|\mathit{sInfo_u}, \textbf{C}) \, dz.
\]

We introduce the variational distribution \( q(z|x_u, \mathit{sInfo_u}, \textbf{C}) \) to approximate the posterior distribution \( p(z|x_u, \mathit{sInfo_u}, \textbf{C}) \), and use Jensen's inequality to derive the ELBO:

\[
\log p(x_u|\mathit{sInfo_u}, \textbf{C}) = \log \int q(z|x_u, \mathit{sInfo_u}, \textbf{C}) \frac{p(x_u, z|\mathit{sInfo_u}, \textbf{C})}{q(z|x_u, \mathit{sInfo_u}, \textbf{C})} \, dz.
\]

\[
\log p(x_u|\mathit{sInfo_u}, \textbf{C}) = \log \mathbb{E}_{q(z|x_u, \mathit{sInfo_u}, \textbf{C})} \left[ \frac{p(x_u, z|\mathit{sInfo_u}, \textbf{C})}{q(z|x_u, \mathit{sInfo_u}, \textbf{C})} \right] \geq \mathbb{E}_{q(z|x_u, \mathit{sInfo_u}, \textbf{C})} \left[ \log \frac{p(x_u, z|\mathit{sInfo_u}, \textbf{C})}{q(z|x_u, \mathit{sInfo_u}, \textbf{C})} \right].
\]

The expectation on the right is the ELBO:

\[
\mathbb{E}_{q(z|x_u, \mathit{sInfo_u}, \textbf{C})} \left[ \log \frac{p(x_u, z|\mathit{sInfo_u}, \textbf{C})}{q(z|x_u, \mathit{sInfo_u}, \textbf{C})} \right] = \mathbb{E}_{q(z|x_u, \mathit{sInfo_u}, \textbf{C})} \left[ \log p(x_u, z|\mathit{sInfo_u}, \textbf{C}) - \log q(z|x_u, \mathit{sInfo_u}, \textbf{C}) \right].
\]

We decompose the joint distribution \( p(x_u, z|\mathit{sInfo_u}, \textbf{C}) \) as follows:

\[
p(x_u, z|\mathit{sInfo_u}, \textbf{C}) = p(x_u|z, \mathit{sInfo_u}, \textbf{C}) p(z|\mathit{sInfo_u}, \textbf{C}).
\]

Thus, the ELBO can be further decomposed as:

\begin{equation}
    \begin{aligned}
        &\mathbb{E}_{q(z|x_u, \mathit{sInfo_u}, \textbf{C})} \left[ \log p(x_u, z|\mathit{sInfo_u}, \textbf{C}) - \log q(z|x_u, \mathit{sInfo_u}, \textbf{C}) \right]\\
        =& \mathbb{E}_{q(z|x_u, \mathit{sInfo_u}, \textbf{C})} \left[ \log p(x_u|z, \mathit{sInfo_u}, \textbf{C}) + \log p(z|\mathit{sInfo_u}, \textbf{C}) - \log q(z|x_u, \mathit{sInfo_u}, \textbf{C}) \right] \nonumber.
    \end{aligned}
\end{equation}

We can rewrite it as the sum of two parts:

\[
\mathbb{E}_{q(z|x_u, \mathit{sInfo_u}, \textbf{C})} \left[ \log p(x_u|z, \mathit{sInfo_u}, \textbf{C}) \right] - D_{KL}(q(z|x_u, \mathit{sInfo_u}, \textbf{C}) \| p(z|\mathit{sInfo_u}, \textbf{C})),
\]

where \(D_{KL}(q(z|x_u, \mathit{sInfo_u}, \textbf{C}) \| p(z|\mathit{sInfo_u}, \textbf{C}))\) is the Kullback-Leibler divergence between the variational distribution \( q(z|x_u, \mathit{sInfo_u}, \textbf{C}) \) and the conditional prior distribution \( p(z|\mathit{sInfo_u}, \textbf{C}) \).

In summary, when \( z \) is obtained from the conditional distribution \( p(z|\mathit{sInfo_u}, \textbf{C}) \), and \( \mathit{sInfo_u} \) and \( \textbf{C} \) are independent, the ELBO of the VAE is:

\[
\text{ELBO} = \mathbb{E}_{q(z|x_u, \mathit{sInfo_u}, \textbf{C})} \left[ \log p(x_u|z, \mathit{sInfo_u}, \textbf{C}) \right] - D_{KL}(q(z|x_u, \mathit{sInfo_u}, \textbf{C}) \| p(z|\mathit{sInfo_u}, \textbf{C})).
\]
\end{proof}
\subsection{Synthetic Dataset Details.}
\label{sydata}
For synthetic data experiments, the number of user samples is 300, and for each user sample, 500 items. We assume that users are influenced by four different categories of confounders, and the causal relationships of these four confounders can be generated using Eq \ref{eq:2}. Specifically, we consider the following causal structural model:
\begin{equation}
\centering
    \begin{aligned}
        n_1 &= \mathcal{N}(\lambda_{1},\beta_{1}),\\
        n_2 &= \mathcal{N}(\lambda_{2},\beta_{2}),\\
        n_3 &= \mathcal{N}(\lambda_{3},\beta_{3}),\\
        n_4 &= \mathcal{N}(\lambda_{4},\beta_{4}),\\
        c_1 &= n_1,\\
        c_2 &= w_2(u)(c_1) + n_2,\\
        c_3 &= w_3(u)(c_2) + n_3,\\
        c_4 &= w_4(u)(c_2) + n_4,\\
    \end{aligned}
\end{equation}
where $\lambda_{i}$ and $\beta_{i}$ are the mean and variance of the Gaussian distribution, we sample the $\lambda_{i}$ and $\beta_{i}$ from the uniform distribution $[-3,3]$ and $[0.01,4]$, respectively. We sample the user-specific weight $w_i$ from Poisson distribution with the give $u$:
\begin{equation}
    \begin{aligned}
        u = \mathcal{N}(0,1).
    \end{aligned}
\end{equation}
Finally, given the set of confounders and users, we can generate the final observed value $\mathcal{x}$ through a non-linear function $g(\cdot)$:
\begin{equation}
    \begin{aligned}
        \mathcal{X} = g(c_1,c_2,c_3,c_4,u).
    \end{aligned}
\end{equation}
In our experiments, we used a two-layer MLP for the blending generation.
\end{document}